\documentclass{lmcs} %%% last changed 2014-08-20
\pdfoutput=1

\usepackage[utf8]{inputenc}

% LMCS Layouting Macros
\usepackage{lastpage}
\lmcsdoi{18}{3}{36}
\lmcsheading{}{\pageref{LastPage}}{}{}%
{Feb.~28,~2020}{Sep.~21,~2022}{}

%% mandatory lists of keywords 
\keywords{Term rewriting systems, Equational logic, Homological algebra}

%% read in additional TeX-packages or personal macros here:
%% e.g. \usepackage{tikz}

%%\input{myMacros.tex}
%% define non-standard environments BEYOND the ones already supplied 
%% here, for example
\theoremstyle{plain} %\crefname{satz}{Satz}{S\"atze}
%% Do NOT replace the proclamation environments lready provided by
%% your own.

\def\cf{{\em cf.}}

%% due to the dependence on amsart.cls, \begin{document} has to occur
%% BEFORE the title and author information:

\usepackage{amsthm}
\usepackage{amscd}
\usepackage{amsmath}
\usepackage{amssymb}
\usepackage{mathtools}
\usepackage{extarrows}
\usepackage{arydshln}
\usepackage{url}
\usepackage{subcaption}
\usepackage{tikz-cd}
\usetikzlibrary{decorations.pathreplacing, positioning}

\newcommand{\bbK}{\mathbb K}
\newcommand{\bbN}{\mathbb N}
\newcommand{\bbP}{\mathbb P}

\newcommand{\bbZ}{\mathbb Z}
\newcommand{\cM}{\mathcal M}
\newcommand{\cR}{\mathcal R}

\newcommand{\cZ}{\mathcal Z}
\newcommand{\fR}{\mathfrak R}
\newcommand{\bfP}{\mathbf P}
\newcommand{\rrule}{\rightarrow}
\newcommand{\rewrst}[1]{\xleftrightarrow{*}_{#1}}
\newcommand{\tensor}[3]{#1\otimes_{#2}#3}

\DeclareMathOperator{\im}{im}
\DeclareMathOperator{\rank}{rank}
\DeclareMathOperator{\CR}{CP}
\DeclareMathOperator{\nr}{nr}
\DeclareMathOperator{\defc}{def}
\DeclareMathOperator{\pos}{Pos}

\newcommand{\ave}{{\sf ave}}
\newcommand{\s}{{\sf s}}
\newcommand{\zero}{{\sf 0}}
\newcommand{\facmon}[2]{\cM_{(#1,#2)}}
\newcommand{\freemod}[2]{#1\underline{#2}}

\newcommand{\imonring}[1]{\bbZ\langle#1\rangle}

\makeatletter
\def\quot{\relax\ifmmode\delimiter"502F30E\mathopen{}\else/\fi}
\makeatother

\begin{document}

\title[A Lower Bound ... By Homological Methods]{A Lower Bound of the Number of Rewrite Rules Obtained by Homological Methods}
%\titlecomment{{\lsuper*}OPTIONAL comment concerning the title, \eg, 
%  if a variant or an extended abstract of the paper has appeared elsewhere.}

\author[M.~Ikebuchi]{Mirai Ikebuchi}	%required
\address{Massachusetts Institute of Technology, Cambridge, MA, USA}	%required
\email{ikebuchi@mit.edu}  %optional
\thanks{The author thanks Keisuke Nakano for comments that greatly improved the manuscript and Aart Middeldorp for his suggestion about prime critical pairs.}	%optional

%TODO mandatory: add short abstract of the document
\begin{abstract}
  It is well-known that some equational theories such as groups or boolean algebras can be defined by fewer equational axioms than the original axioms.
  However, it is not easy to determine if a given set of axioms is the smallest or not.
  Malbos and Mimram investigated a general method to find a lower bound of the cardinality of the set of equational axioms (or rewrite rules) that is equivalent to a given equational theory (or term rewriting systems), using homological algebra. Their method is an analog of Squier's homology theory on string rewriting systems.
  In this paper, we develop the homology theory for term rewriting systems more and provide a better lower bound under a stronger notion of equivalence than their equivalence.
  The author also implemented a program to compute the lower bounds, and experimented with 64 complete TRSs.
\end{abstract}

\maketitle

\section{Introduction}

The purpose of this paper is to find a lower bound of the number of axioms that are equivalent to a given equational theory.
For example, the theory of groups is given by the following axioms:
\begin{equation}\label{eqn:group}
  \hspace{-4mm} % leave some space for the equation number on the right
\begin{array}{ll@{}l}
  G_1.\ m(m(x_1,x_2),x_3) = m(x_1,m(x_2,x_3)), & G_2.\  m(x_1,e) = x_1, & G_3\   m(e,x_1)=x_1,\\
  G_4.\ m(i(x_1),x_1) = e, & G_5.\  m(x_1,i(x_1)) = e. &
\end{array}
\end{equation}
It is well-known that $G_2$ and $G_5$ can be derived from only $\{G_1,G_3,G_4\}$.
Moreover, the theory of groups can be given by two axioms: the axiom
\[
m(x_1,i(m(m(i(m(i(x_2) , m(i(x_1) ,x_3))) , x_4) ,i(m(x_2,x_4))))) = x_3
\]
together with $G_4$ is equivalent to the group axioms \cite{n81}.
If we use the new symbol $d$ for division instead of multiplication, a single axiom,
\[
d(x_1,d(d(d(x_1,x_1),x_2),x_3),d(d(d(x_1,x_1),x_1),x_3)) = x_2,
\]
is equivalent to the group axioms \cite{hn52}.
However, no single axiom written in symbols $m,i,e$ is equivalent to the group axioms.
This is stated without proof by Tarski \cite{t68} and published proofs are given by Neumann \cite{n81} and Kunen \cite{k92}.
Malbos and Mimram developed a general method to calculate a lower bound of the number of axioms that are ``Tietze-equivalent'' to a given complete term rewriting system (TRS) \cite[Proposition 23]{mm16}.
We state the definition of Tietze equivalence later (Definition \ref{def:tietze}), but roughly speaking, it is an equivalence between equational theories (or TRSs) $(\Sigma_1,R_1)$, $(\Sigma_2,R_2)$ where signatures $\Sigma_1$ and $\Sigma_2$ are not necessarily equal to each other, while the usual equivalence between TRSs is defined for two TRSs $(\Sigma,R_1)$, $(\Sigma,R_2)$ over the same signature (specifically, by ${\xleftrightarrow{*}_{R_1}} = {\xleftrightarrow{*}_{R_2}}$).
For string rewriting systems (SRSs), a work was given earlier by Squier \cite{s87}, and Malbos and Mimram's work is an extension of Squier's work.
Squier provided a rewriting view for ``homology groups of monoids'', and proved the existence of an SRS that does not have any equivalent SRSs that are finite and complete.

In this paper, we will develop Malbos and Mimram's theory more, and show an inequality which gives a better lower bound of the number of axioms with respect to the usual equivalence between TRSs over the same signature.
For the theory of groups, our inequality gives that the number of axioms equivalent to the group axioms is greater than or equal to 2, so we have another proof of Tarski's theorem above as a special case.
Our lower bound is algorithmically computable if a complete TRS is given.

We will first give the statement of our main theorem and some examples in Section \ref{section:algorithm}.
Then, we will see Malbos-Mimram's work briefly.
The idea of their work is to provide an algebraic structure to TRSs and extract information of the TRSs, called homology groups, which are invariant under Tietze equivalence.
The basics of such algebraic tools are given in Section \ref{section:prelim}.
We will explain how resolutions of modules, a notion from abstract algebra, is related to rewriting systems, which is not written in usual textbooks.
 and we will see the idea of the construction of the homology groups of TRSs in Section \ref{section:homology}.
After that, in Section \ref{section:mainthm}, we will prove our main theorem.
In Section \ref{section:prime}, we show prime critical pairs are enough for our computation at the matrix operation level and also at the abstract algebra level.
In Section \ref{section:impossible}, we study the number of rewrite rules in a different perspective:
the minimum of $\#R-\#\Sigma$ over all equivalent TRSs $(\Sigma,R)$ is called the \emph{deficiency} in group theory and we show that deciding whether the deficiency is less than a given integer or not is computationally impossible.

\section{Main Theorem}
\label{section:algorithm}
In this section, we will see our main theorem and some examples.
Throughout this paper, we assume that terms are over the set of variables $\{x_1,x_2,\dotsc\}$ and all signatures we consider are unsorted.
For a signature $\Sigma$, let $T(\Sigma)$ denote the set of terms over the signature $\Sigma$ and the set of variables $\{x_1,x_2,\dotsc\}$.
\begin{defi}
  Let $(\Sigma,R)$ be a TRS.
  The degree of $R$, denoted by $\deg(R)$, is defined by
  \[
  \deg(R) = \gcd\{\#_{i}l - \#_{i}r \mid l \rrule r \in R, i=1,2,\dotsc\}
  \]
  where $\#_it$ is the number of occurrences of $x_i$ in $t$ for $t\in T(\Sigma)$ and we define $\gcd\{0\} = 0$ for convenience.
  For example, $\deg(\{f(x_1,x_2,x_2)\rrule x_1,\ g(x_1,x_1,x_1)\rrule e\}) = \gcd\{0,2,3\}=1$.
\end{defi}
Let $(\Sigma,R=\{l_1\rrule r_1,\dotsc,l_n\rrule r_n\})$ be a TRS and  $\CR(R)=\{(t_1,s_1),\dots,(t_m,s_m)\}$ be the set of the critical pairs of $R$.
For any $i\in\{1,\dots,m\}$, let $a_i, b_i$ be the numbers in $\{1,\dots,n\}$ such that the critical pair $(t_i,s_i)$ is obtained by $l_{a_i}\rrule r_{a_i}$ and $l_{b_i}\rrule r_{b_i}$,
that is, $t_i = r_{a_i}\sigma \leftarrow l_{a_i}\sigma = C[l_{b_i}\sigma] \rightarrow C[r_{b_i}\sigma] = s_i$ for some substitution $\sigma$ and single-hole context $C$.
Suppose $R$ is complete.
We fix an arbitrary rewriting strategy and for a term $t$, let $\nr_j(t)$ be the number of times $l_j\rrule r_j$ is used to reduce $t$ into its $R$-normal form with respect to the strategy.
To state our main theorem, we introduce a matrix $D(R)$ and a number $e(R)$:
\begin{defi}
  Suppose $d = \deg(R)$ is prime or $0$.
  If $d=0$, let $\fR$ be $\bbZ$, and if $d$ is prime,
  let $\fR$ be $\bbZ/d\bbZ$ (integers modulo $d$).
  For $1 \le i \le m$, $1 \le j \le n$, let $D(R)_{ij}$ be the integer $\nr_j(s_i) - \nr_j(t_i) + \delta(b_i,j) - \delta(a_i,j)$ where $\delta(x,y)$ is the Kronecker delta.
  The matrix $D(R)$ is defined by $D(R)=(D(R)_{ji})_{j=1,\dots,n,i=1,\dots,m}$.
\end{defi}
\begin{defi}
  Let $\fR$ be $\bbZ$ or $\bbZ/p\bbZ$ for any prime $p$.
  If an $m\times n$ matrix $M$ over $\fR$ is of the form
  \[ \left(
  \begin{matrix}
    e_1 & 0 & \dots & \dots & \dots & \dots & \dots & 0\\
    0 & e_2 & 0 & \dots & \dots & \dots & \dots & 0\\
    \vdots & 0 & \ddots & 0 & \dots & \dots & \dots & \vdots\\
    \vdots & \vdots & 0 & e_r & 0 & \dots & \dots & \vdots\\
    \vdots & \vdots & \vdots & 0 & 0 & \dots & \dots & \vdots\\
    \vdots & \vdots & \vdots & \vdots & \vdots & \ddots & \dots & \vdots\\
    0 & 0 & \dots & \dots & \dots & \dots & \dots & 0
  \end{matrix}
  \right)
  \]
  and $e_i$ divides $e_{i+1}$ for every $1 \le i < r$, we say $M$ is in \emph{Smith normal form}.
  We call $e_i$s the \emph{elementary divisors}.
\end{defi}
It is known that every matrix over $\fR$ can be transformed into Smith normal form by elementary row/column operations, that is, (1) switching a row/column with another row/column, (2) multiplying each entry in a row/column by an invertible element in $\fR$, and (3) adding a multiple of a row/column to another row/column \cite[9.4]{r10}.
(If $d = 0$, the invertible elements in $\fR\cong \bbZ$ are $1$ and $-1$, and if $d$ is prime, any nonzero element in $\fR=\bbZ/d\bbZ$ is invertible.
So, $e(R)$ is equal to the rank of $D(R)$ if $d$ is prime.)
In general, the same fact holds for any principal ideal domain $\fR$.
\begin{defi}
We define $e(R)$ as the number of invertible elements in the Smith normal form of the matrix $D(R)$ over $\fR$.
\end{defi}
We state the main theorem.
\begin{thm} \label{thm:main_explicit}
  Let $(\Sigma,R)$ be a complete TRS and suppose $d=\deg(R)$ is 0 or prime.
  For any set of rules $R'$ equivalent to $R$, i.e., $\xleftrightarrow{*}_{R'} \,=\,\xleftrightarrow{*}_{R}$,
  %the number of rules in $R'$ is bounded from below by the number of rules in $R$ minus $e(R)$. That is,
  we have
  \begin{equation} \label{eqn:main_ineq}
  \#R' \ge \#R - e(R).
  \end{equation}
\end{thm}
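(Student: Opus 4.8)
The plan is to exhibit $\#R-e(R)$ as a homological invariant of the theory presented by $(\Sigma,R)$ and then to bound that invariant from above by $\#R'$ for every equivalent presentation over the same $\Sigma$. Write $\cM$ for the monoid associated with the theory and $\imonring{\cM}$ for its integral monoid ring. The homological input I will rely on---to be developed in Sections~\ref{section:prelim}--\ref{section:homology} in the style of Malbos and Mimram---attaches to a complete TRS a partial free resolution of the trivial module $\bbZ$,
\begin{equation}\label{eq:res}
\imonring{\cM}^{\CR(R)}\xrightarrow{\partial_3}\imonring{\cM}^{R}\xrightarrow{\partial_2}\imonring{\cM}^{\Sigma}\xrightarrow{\partial_1}\imonring{\cM}\to\bbZ\to0,
\end{equation}
with one free generator per critical pair, per rule, and per function symbol; the lower degrees are exact for any presentation, and completeness is exactly what forces exactness at degree $2$, \ie\ that the critical pairs generate all syzygies among the rules. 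Applying $\fR\otimes_{\imonring{\cM}}-$ turns \eqref{eq:res} into a complex of free $\fR$-modules whose homology $H_n:=\operatorname{Tor}^{\imonring{\cM}}_n(\fR,\bbZ)$ depends only on $\cM$ and $\fR$; in particular it is invariant under the usual equivalence, which preserves $\cM$. A bookkeeping check identifies the augmented top boundary $\overline{\partial}_3\colon\fR^{\CR(R)}\to\fR^{R}$ with the matrix $D(R)$.

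First I would compute the left-hand invariant from the complete system $R$. Over the principal ideal domain $\fR$, Smith normal form identifies $\#R-e(R)$ with the minimal number of generators $\mu(\operatorname{coker}D(R))$ of $\operatorname{coker}D(R)=\fR^{R}/\im\overline{\partial}_3$: the unit elementary divisors give vanishing cyclic summands, leaving $\#R-\rank D(R)$ free and $\rank D(R)-e(R)$ torsion generators. To connect this cokernel to the invariants $H_n$, I would use the short exact sequence
\begin{equation}\label{eq:ses}
0\to H_2\to\operatorname{coker}D(R)\to\im\overline{\partial}_2\to0
\end{equation}
arising from $\im\overline{\partial}_3\subseteq\ker\overline{\partial}_2$. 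Since $\im\overline{\partial}_2$ is a submodule of the free module $\fR^{\Sigma}$ it is free, of rank $\rho:=\rank\overline{\partial}_2$, so \eqref{eq:ses} splits and $\operatorname{coker}D(R)\cong H_2\oplus\fR^{\rho}$. The number $\rho$ is itself an invariant of $\cM$ and the fixed signature, because the degree $\le 1$ part of \eqref{eq:res} depends only on $\Sigma$ and $\cM$ while $H_1=\ker\overline{\partial}_1/\im\overline{\partial}_2$ is invariant. Counting generators then yields
\begin{equation}\label{eq:count}
\#R-e(R)=\mu(\operatorname{coker}D(R))=\mu(H_2)+\rho.
\end{equation}

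Next I would bound $\#R'$ for an arbitrary set of rules $R'$ over $\Sigma$ with $\xleftrightarrow{*}_{R'}=\xleftrightarrow{*}_{R}$, noting that $R'$ need not be complete. The presentation alone gives a sequence $\imonring{\cM}^{R'}\xrightarrow{\partial_2'}\imonring{\cM}^{\Sigma}\xrightarrow{\partial_1}\imonring{\cM}\to\bbZ\to0$ that is exact in degrees $0$ and $1$; choosing generators for $\ker\partial_2'$ extends it to a resolution whose augmentation computes the same invariant $H_2=\ker\overline{\partial}_2'/\im\overline{\partial}_3'$. Because $\Sigma$ is unchanged, the degree $\le 1$ part and $H_1$ are the same as before, so $\rank\overline{\partial}_2'=\rho$; hence $\ker\overline{\partial}_2'$ is a submodule of $\fR^{R'}$ of rank $\#R'-\rho$, and, over the PID $\fR$, free of that rank. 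As $H_2$ is a quotient of $\ker\overline{\partial}_2'$ we get $\mu(H_2)\le\#R'-\rho$, and combining with \eqref{eq:count},
\begin{equation}\label{eq:conc}
\#R'\ge\rho+\mu(H_2)=\#R-e(R),
\end{equation}
which is \eqref{eqn:main_ineq}. When $d$ is prime, $\fR$ is a field, every module above is a vector space, $\mu$ is $\dim_\fR$, and all steps reduce to rank counting.

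The routine part is the linear algebra of \eqref{eq:ses}--\eqref{eq:conc}; the main obstacle is the homological input behind \eqref{eq:res}: building the resolution from rewriting data in the \emph{term} (tree) setting, proving that completeness yields exactness at degree $2$, and verifying that the augmented boundary is exactly $D(R)$. A secondary subtlety is that $R'$ is only assumed to be a presentation, not complete, so for it one cannot read $H_2$ off a critical-pair matrix and must instead invoke the abstract extension of its partial resolution, realizing $H_2$ as a quotient of $\ker\overline{\partial}_2'$.
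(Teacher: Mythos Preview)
Your argument is essentially the paper's own: the same short exact sequence $0\to H_2\to\operatorname{coker}D(R)\to\im\overline\partial_2\to0$, the same splitting via freeness of $\im\overline\partial_2$ over a PID, the identity $\#R-e(R)=s(H_2)+\rho$ (the paper's Lemma~\ref{lemma:smith_hom}), and the invariance of $\rho$ via $H_1$ together with the fixed signature (the paper's proof of Theorem~\ref{thm:main_explicit}). The only inaccuracy is your framing of the base ring as a \emph{monoid ring} $\imonring{\cM}$: for genuine term rewriting there is no single monoid, and the paper instead works over the ringoid $\overline{\imonring{\bbK}}^{(\Sigma,R)}$ built from bicontexts; since you already defer the construction to Sections~\ref{section:prelim}--\ref{section:homology} ``in the style of Malbos and Mimram'', this is a terminological slip rather than a structural gap, and your indexing of the $\partial_i$ is shifted by one relative to the paper's.
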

We shall see some examples.
\begin{exa}
Consider the signature $\Sigma = \{\zero^{(0)}, \s^{(1)}, \ave^{(2)}\}$ and the set $R$ of rules
\[
\begin{array}{lll}
  A_1.\ave(\zero,\zero) \rrule \zero, & A_2.\ave(x_1,\s(x_2)) \rrule \ave(\s(x_1),x_2), & A_3.\ave(\s(\zero),\zero) \rrule \zero,\\
  A_4.\ave(\s(\s(\zero)),\zero) \rrule s(\zero), &A_5.\ave(\s(\s(\s(x_1))),x_2)\rrule \s(\ave(\s(x_1),x_2)).&
\end{array}
\]
$R$ satisfies $\deg(R)=0$ and has one critical pair $C$:
\begin{center}
\includegraphics[scale=0.5]{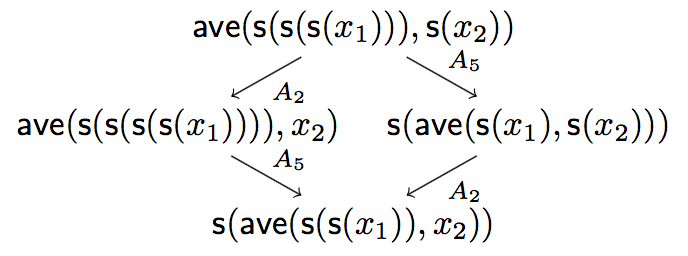}
\end{center}
We can see the matrix $D(R)$ is the $5\times 1$ zero matrix. The zero matrix is already in Smith normal form and $e(R) = 0$.
Thus, for any $R'$ equivalent to $R$, $\#R' \ge \#R = 5$. This means there is no smaller TRS equivalent to $R$.
Also, Malbos-Mimram's lower bound, denoted by $s(H_2(\Sigma,R))$, is equal to 3, though we do not explain how to compute it in this paper.
(We will roughly describe $s(H_2(\Sigma,R))$ in Section \ref{section:homology}.)
\end{exa}
As a generalization of this example, we have an interesting corollary of our main theorem:
\begin{cor}
  Let $(\Sigma,R)$ be a complete TRS.
  If for any critical pair $u \leftarrow t \rightarrow v$, two rewriting paths $t \rightarrow u \rightarrow \dots \rightarrow \hat t$ and $t \rightarrow v \rightarrow \dots \rightarrow \hat t$ contain the same number of $l\rrule r$ for each $l\rrule r\in R$, 
  then there is no $R'$ equivalent to $R$ which satisfies $\#R' < \#R$.
\end{cor}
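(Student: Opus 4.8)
The plan is to derive this corollary directly from Theorem \ref{thm:main_explicit} by showing that the hypothesis forces $e(R) = 0$. Recall that $e(R)$ counts the invertible elementary divisors in the Smith normal form of $D(R)$, and that the Smith normal form is obtained from $D(R)$ by invertible row and column operations over $\fR$. If I can show that the hypothesis makes $D(R)$ the zero matrix, then its Smith normal form is already the zero matrix, which has no invertible elementary divisors, so $e(R) = 0$. Substituting $e(R) = 0$ into \eqref{eqn:main_ineq} gives $\#R' \ge \#R - 0 = \#R$ for every equivalent $R'$, which is exactly the claimed conclusion $\#R' < \#R$ is impossible.

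The heart of the argument is therefore to unpack the definition of the entries $D(R)_{ij} = \nr_j(s_i) - \nr_j(t_i) + \delta(b_i,j) - \delta(a_i,j)$ and match them against the hypothesis. First I would fix a critical pair $(t_i,s_i)$ arising from rules $l_{a_i}\rrule r_{a_i}$ and $l_{b_i}\rrule r_{b_i}$, so that the peak is $t_i = r_{a_i}\sigma \leftarrow l_{a_i}\sigma = C[l_{b_i}\sigma] \rightarrow C[r_{b_i}\sigma] = s_i$. The two rewriting paths in the corollary's statement, $t \rightarrow u \rightarrow \dots \rightarrow \hat t$ and $t \rightarrow v \rightarrow \dots \rightarrow \hat t$, correspond (with $t = l_{a_i}\sigma$, $u = t_i$, $v = s_i$) to the path that first applies $l_{a_i}\rrule r_{a_i}$ then normalizes $t_i$, and the path that first applies $l_{b_i}\rrule r_{b_i}$ then normalizes $s_i$. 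Counting how often each rule $l_j\rrule r_j$ appears on these two paths, the first path uses it $\delta(a_i,j) + \nr_j(t_i)$ times and the second uses it $\delta(b_i,j) + \nr_j(s_i)$ times; the $\delta$ terms account for the initial rewrite step applying $l_{a_i}$ or $l_{b_i}$, and the $\nr_j$ terms count the uses during normalization to the common normal form $\hat t$ (which exists and is unique since $R$ is complete).

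The hypothesis says these two counts agree for every $j$, i.e. $\delta(a_i,j) + \nr_j(t_i) = \delta(b_i,j) + \nr_j(s_i)$ for all $j$, which rearranges precisely to $D(R)_{ij} = \nr_j(s_i) - \nr_j(t_i) + \delta(b_i,j) - \delta(a_i,j) = 0$. Since this holds for every critical pair index $i$ and every rule index $j$, the matrix $D(R)$ is identically zero, and the conclusion follows as above. The main obstacle I anticipate is the bookkeeping that aligns the abstract rewriting paths of the corollary's statement with the specific normalization-count quantities $\nr_j$ used in defining $D(R)$: I must be careful that the fixed rewriting strategy underlying $\nr_j$ indeed realizes normalization of $t_i$ and $s_i$ along paths whose rule-usage counts are what the hypothesis constrains. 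Because the corollary quantifies over \emph{all} rewriting paths to a common $\hat t$ while $\nr_j$ refers to one fixed strategy, I would either note that the hypothesis applied to the strategy-induced paths already gives $D(R)_{ij}=0$, or invoke that the total count difference is path-independent so that the choice of strategy is immaterial.
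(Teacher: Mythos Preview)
Your proposal is correct and matches the paper's intended argument: the paper does not give a separate proof of this corollary but presents it as an immediate generalization of the preceding example, where the same reasoning (the hypothesis forces $D(R)$ to be the zero matrix, hence $e(R)=0$, hence $\#R'\ge\#R$ by Theorem~\ref{thm:main_explicit}) is carried out explicitly. Your careful unpacking of why $\delta(a_i,j)+\nr_j(t_i)=\delta(b_i,j)+\nr_j(s_i)$ follows from the hypothesis, and your remark about the fixed strategy versus arbitrary paths, are exactly the bookkeeping the paper leaves implicit.
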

\begin{exa} \label{example:group}
  We compute the lower bound for the theory of groups, (\ref{eqn:group}).
  A complete TRS $R$ for the theory of groups is given by
  \[
  \begin{array}{ll}
    G_1.\  m(m(x_1,x_2),x_3) \rrule m(x_1,m(x_2,x_3))
    & G_2.\  m(e,x_1) \rrule x_1\\
    G_3.\  m(x_1,e)\rrule x_1
    & G_4.\  m(x_1,i(x_1)) \rrule e\\
    G_5.\  m(i(x_1),x_1)\rrule e
    & G_6.\  m(i(x_1),m(x_1,x_2))\rrule x_2\\
    G_7.\  i(e)\rrule e
    & G_8.\  i(i(x_1))\rrule x_1\\
    G_9.\  m(x_1,m(i(x_1),x_2))\rrule x_2
    & G_{10}.\  i(m(x_1,x_2))\rrule m(i(x_2),i(x_1)).
  \end{array}
  \]
  Since $\deg(R)=2$, we set $\fR = \bbZ/2\bbZ$. $R$ has 48 critical pairs and we get the $10\times 48$ matrix $D(R)$ given in Appendix A.
  The author implemented a program which takes a complete TRS as input and computes its critical pairs, the matrix $D(R)$, and $e(R)$.
  The program is available at \url{https://github.com/mir-ikbch/homtrs}.
  The author checked $e(R) = \rank(D(R)) = 8$ by the program, and also by MATLAB's \texttt{gfrank} function (\url{https://www.mathworks.com/help/comm/ref/gfrank.html}).
  Therefore we have $\#R-e(R)= 2$.
  This provides a new proof that there is no single axiom equivalent to the theory of groups.

  Malbos-Mimram's lower bound is given by $s(H_2(\Sigma,R)) = 0$.
\end{exa}
\begin{exa}
  Let $\Sigma=\{-^{(1)},f^{(1)},+^{(2)},\cdot^{(2)}\}$ and $R$ be
  \[
  \begin{array}{ll}
  A_1.\ -(- x_1)\rrule x_1,& A_2.\ -f(x_1)\rrule f(-x_1),\\
  A_3.\ -(x_1 + x_2) \rrule (-x_1)\cdot(-x_2),& A_4.\ -(x_1\cdot x_2) \rrule (-x_1)+(-x_2).
\end{array}
  \]
  \begin{figure} \label{figure:cps} 
    \includegraphics[scale=0.5]{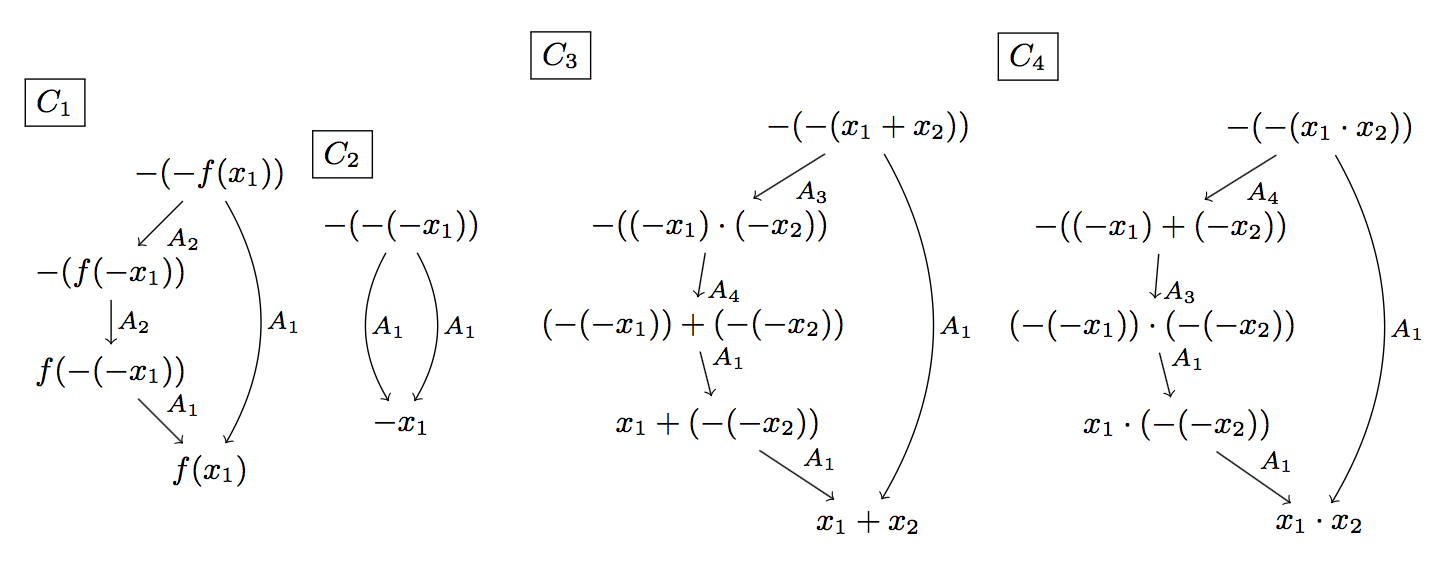}\caption{The critical pairs of $R$}
  \end{figure}
  We have $\deg(R)=0$ and $R$ has four critical pairs (Figure 1).
  The corresponding matrix $D(R)$ and its Smith normal form are computed as
  \[
  D(R)=
  \left(
  \begin{matrix}
    0 & 0 & 1 & 1\\
    2 & 0 & 0 & 0\\
    0 & 0 & 1 & 1\\
    0 & 0 & 1 & 1\\
  \end{matrix}
  \right)
  \rightsquigarrow
    \left(
    \begin{matrix}
      0 & 0 & 1 & 1\\
      2 & 0 & 0 & 0\\
      0 & 0 & 0 & 0\\
      0 & 0 & 0 & 0\\
    \end{matrix}
    \right)
    \rightsquigarrow
    \left(
    \begin{matrix}
      0 & 0 & 1 & 0\\
      2 & 0 & 0 & 0\\
      0 & 0 & 0 & 0\\
      0 & 0 & 0 & 0\\
    \end{matrix}
    \right)
    \rightsquigarrow
    \left(
    \begin{matrix}
      1 & 0 & 0 & 0\\
      0 & 2 & 0 & 0\\
      0 & 0 & 0 & 0\\
      0 & 0 & 0 & 0\\
    \end{matrix}
    \right).
    \]
  Thus, $\#R - e(R) = 3$.
  This tells $R$ does not have any equivalent TRS with 2 or fewer rules, and it is not difficult to see $R$ has an equivalent TRS with 3 rules, $\{A_1,A_2,A_3\}$.

  Malbos-Mimram's lower bound for this TRS is given by $s(H_2(\Sigma,R)) = 1$.
\end{exa}
Although the equality of (\ref{eqn:main_ineq}) is attained for the above three examples, it is not guaranteed the equality is attained by some TRS $R'$ in general.
For example, the TRS with only the associative rule $\{f(f(x_1,x_2),x_3)\rrule f(x_1,f(x_2,x_3))\}$ satisfies $\#R-e(R)=0$ and it is obvious that no TRSs with zero rule is equivalent.
Also, in Appendix B, Malbos-Mimram's and our lower bounds for various examples are given.

\section{Preliminaries on Algebra}
\label{section:prelim}
In this section, we give a brief introduction to module theory, homological algebra, and Squier's theory of homological algebra for string rewriting systems (SRSs) \cite{s87}.
Even though Squier's theory is not directly needed to prove our theorem, it is helpful to understand the homology theory for TRSs, which is more complicated than SRSs' case.

\subsection{Modules and Homological Algebra}
We give basic definitions and theorems on module theory and homological algebra without proofs.
For more details, readers are referred to \cite{r10,r09} for example.

Modules are the generalization of vector spaces in which the set of scalars form a ring, not necessarily a field.
\begin{defi}
  Let $\fR$ be a ring and $(M,+)$ be an abelian group.
  For a map $\cdot:\fR\times M \rightarrow M$, $(M,+,\cdot)$ is a \emph{left $\fR$-module} if for all $r,s \in \fR$ and $x,y \in M$, we have
  \[
    r\cdot(x+y) = r\cdot x + r\cdot y,\
    (r+s)\cdot x = r\cdot x + s\cdot x,\
    (rs)\cdot x = r\cdot(s\cdot x)
  \]
  where $rs$ denotes the multiplication of $r$ and $s$ in $\fR$.
  We call the map $\cdot$ \emph{scalar multiplication}.

  For a map $\cdot : M \times \fR \rightarrow M$, $(M,+,\cdot)$ is a \emph{right $\fR$-module} if for any $r,s \in \fR$ and $x,y \in M$,
  \[
    (x+y)\cdot r = x\cdot r + y\cdot r,\
    x\cdot (r+s) = x\cdot r + x\cdot s,\
    x\cdot(sr) = (x\cdot s)\cdot r.
  \]
\end{defi}
If ring $\fR$ is commutative, we do not distinguish between left $\fR$-modules and right $\fR$-modules and simply call them $\fR$-modules.

Linear maps and isomorphisms of modules are also defined in the same way as for vector spaces.
\begin{defi}
  For two left  $\fR$-modules $(M_1,+_1,\cdot_1),(M_2,+_2,\cdot_2)$, a group homomorphism $f : (M_1,+_1) \rightarrow (M_2,+_2)$ is an \emph{$\fR$-linear map} if it satisfies $f(r\cdot_1 x) = r\cdot_2 f(x)$ for any $r\in \fR$ and $x\in M_1$.
  An $\fR$-linear map $f$ is an \emph{isomorphism} if it is bijective, and two modules are called \emph{isomorphic} if there exists an isomorphism between them.
\end{defi}

\begin{exa}
  Any abelian group $(M,+)$ is a $\bbZ$-module under the scalar multiplication $
  n\cdot x = \underbrace{x+\dotsb + x}_n$.
\end{exa}
\begin{exa}
  For any ring $\fR$, the direct product $\fR^n=\underbrace{\fR\times \dotsb \times\fR}_{n}$ forms a left $\fR$-module under the scalar multiplication $r\cdot(r_1,\dotsc,r_n)=(rr_1,\dotsc,rr_n)$.
\end{exa}
\begin{exa}
  Let $\fR$ be a ring and $X$ be a set.
  $\fR \underline X$ denotes the set of formal linear combinations
  \[
    \sum_{x\in X}r_x \underline x\quad  (r_x \in \fR)
  \]
  where $r_x = 0$ except for finitely many $x$s.
  The underline is added to emphasize a distinction between $r\in\fR$ and $x\in X$.
  $\fR \underline X$ forms a left $\fR$-module under the addition and the scalar multiplication defined by
  \[
  \left(\sum_{x\in X}r_x \underline x\right) + \left(\sum_{x\in X}s_x \underline x\right)
  = \sum_{x \in X}(r_x + s_x) \underline x,\quad
  s\cdot\left(\sum_{x\in X}r_x \underline x\right) = \sum_{x\in X}(sr_x) \underline x.
  \]
  If $X$ is the empty set, $\fR\underline X$ is the left $\fR$-module $\{0\}$ consisting of only the identity element.
  We simply write $0$ for $\{0\}$.
  $\fR \underline X$ is called the \emph{free left $\fR$-module generated by $X$}.
  If $\#X = n\in \bbN$, $\fR \underline X$ can be identified with $\fR^n$.

  A left $\fR$-module $M$ is said to be \emph{free} if $M$ is isomorphic to $\fR\underline{X}$ for some $X$.
  Free modules have some similar properties to vector spaces.
  If a left $\fR$-module $F$ is free, then there exists a basis (i.e., a subset that is linearly independent and generating) of $F$.
  If a free left $\fR$-module $F$ has a basis $(v_1,\dotsc,v_n)$, any $\fR$-linear map $f : F \rightarrow M$ is uniquely determined if the values $f(v_1),\dotsc,f(v_n)$ are specified.
  Suppose $F_1$, $F_2$ are free left $\fR$-modules and $f: F_1 \rightarrow F_2$ is an $\fR$-linear map.
  If $F_1$ has a basis $(v_1,\dotsc,v_n)$ and $F_2$ has a basis $(w_1,\dotsc,w_m)$,
  the matrix $(a_{ij})_{i=1,\dotsc,n,j=1,\dotsc,m}$ where
  $a_{ij}$s satisfy $f(v_i)=a_{i1}w_1 + \dotsb + a_{im}w_m$  for any $i=1,\dotsc,n$
  is called a \emph{matrix representation} of $f$.
\end{exa}
We define submodules and quotient modules, as in linear algebra.
\begin{defi}
  Let $(M,+,\cdot)$ be a left (resp. right) $\fR$-module.
  A subgroup $N$ of $(M,+)$ is a \emph{submodule} if for any $x \in N$ and $r\in\fR$, the scalar multiplication $r\cdot x$ (resp. $x\cdot r$) is in $N$.

  For any submodule $N$, the quotient group $M/N$ is also an $\fR$-module.
  $M/N$ is called the \emph{quotient module} of $M$ by $N$.
\end{defi}
For submodules and quotient modules, the following basic theorem is known:
\begin{thm}[First isomorphism theorem {\cite[Theorem 8.8]{r10}}]
  Let $(M,+,\cdot),(M',+',\cdot')$ be left (or right) $\fR$-modules, and $f : M \rightarrow M'$ be an $\fR$-linear map.
  \begin{enumerate}
    \item The inverse image of $0$ by $f$, $\ker f = \{x \in M\mid f(x) = 0\}$, is a submodule of $M$.
    \item The image of $M$ by $f$, $\im f = \{f(x)\mid x \in M\}$, is a submodule of $M'$.
    \item The image $\im f$ is isomorphic to $M/\ker f$.
  \end{enumerate}
\end{thm}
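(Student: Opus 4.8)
The plan is to handle the three assertions in turn, since the first two are direct closure verifications and only the third requires constructing a map. For part (1), I would first recall that $\ker f$ is a subgroup of $(M,+)$ because $f$ is in particular a group homomorphism, and then check closure under scalar multiplication: for $x \in \ker f$ and $r \in \fR$, $\fR$-linearity gives $f(r\cdot x) = r\cdot' f(x) = r\cdot' 0 = 0$, so $r\cdot x \in \ker f$. For part (2), the set $\im f$ is a subgroup of $(M',+')$ again because $f$ is a group homomorphism, and for $f(x) \in \im f$ and $r \in \fR$ we have $r\cdot' f(x) = f(r\cdot x) \in \im f$, giving closure. Thus both $\ker f$ and $\im f$ are submodules according to the definition of submodule given above.

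For part (3), I would define the candidate isomorphism $\bar f : M/\ker f \rightarrow \im f$ by $\bar f(x + \ker f) = f(x)$ and then verify, in order, that it is well defined, that it is $\fR$-linear, that it is injective, and that it is surjective. Well-definedness is the point that needs care: if $x + \ker f = y + \ker f$, then $x - y \in \ker f$, so $f(x) - f(y) = f(x-y) = 0$ and hence $f(x) = f(y)$, which shows the assigned value does not depend on the chosen coset representative. The $\fR$-module structure on the quotient $M/\ker f$ is the one supplied by the earlier definition of quotient module, and with respect to it $\bar f$ is additive and commutes with scalar multiplication precisely because $f$ has these properties and the operations on $M/\ker f$ are computed representative-wise.

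Injectivity follows from the same computation read backwards: $\bar f(x + \ker f) = 0$ means $f(x) = 0$, i.e.\ $x \in \ker f$, so $x + \ker f$ is the zero coset, and therefore the kernel of $\bar f$ is trivial. Surjectivity is immediate, since every element of $\im f$ is $f(x)$ for some $x \in M$ and hence equals $\bar f(x + \ker f)$. Combining these facts, $\bar f$ is a bijective $\fR$-linear map and thus an isomorphism, which establishes $\im f \cong M/\ker f$. I expect the only genuinely delicate step to be the well-definedness of $\bar f$; everything else is a mechanical transport of the group-theoretic first isomorphism theorem through the scalar-multiplication axioms.
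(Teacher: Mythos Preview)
Your proof is correct and is the standard argument for the first isomorphism theorem. The paper does not supply its own proof of this statement; it simply cites it from Rotman's textbook, so there is nothing to compare against beyond noting that your argument is exactly the classical one that would appear in such a reference.
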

\begin{thm}[Third isomorphism theorem {\cite[Theorem 7.10]{r10}}]
  Let $M$ be a left (or right) $\fR$-module, $N$ be a submodule of $M$, and $L$ be a submodule of $N$. Then $(M/L)/(N/L)$ is isomorphic to $M/N$.
\end{thm}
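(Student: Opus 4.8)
The plan is to reduce everything to the First isomorphism theorem, which is already available above the statement. The idea is to build a single $\fR$-linear surjection out of $M/L$ whose kernel is exactly $N/L$; once that is done, the isomorphism $(M/L)/(N/L)\cong M/N$ drops out with no further work.

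First I would define $\phi\colon M/L \to M/N$ on representatives by $\phi(x+L)=x+N$. The first thing to check is that this is well defined: if $x+L = y+L$ then $x-y\in L$, and since $L$ is a submodule of $N$ we get $x-y\in N$, so $x+N=y+N$. Next I would verify that $\phi$ is $\fR$-linear, which is routine because on representatives the rule is the identity: it respects addition since $(x+y)+N=(x+N)+(y+N)$, and it respects scalar multiplication since $\phi(r\cdot(x+L))=rx+N=r\cdot(x+N)$. Surjectivity is immediate, because every coset $x+N$ is hit by $\phi(x+L)$.

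The crucial computation is the kernel, where I would show $\ker\phi = N/L$. Indeed, $\phi(x+L)=0$ in $M/N$ means $x+N=N$, i.e.\ $x\in N$; and $x\in N$ is exactly the condition that the coset $x+L$ lies in the submodule $N/L$ of $M/L$. Hence $\ker\phi=\{x+L \mid x\in N\}=N/L$. Applying the First isomorphism theorem to $\phi$ then yields $\im\phi \cong (M/L)/\ker\phi$, and since $\phi$ is surjective with kernel $N/L$, this reads $M/N \cong (M/L)/(N/L)$, which is the claim.

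The only genuinely delicate points are the well-definedness of $\phi$, where one must invoke $L\subseteq N$, and the identification of $\ker\phi$ with $N/L$, where one must recognize that the condition $x\in N$ on a representative is precisely membership of the coset $x+L$ in the quotient submodule $N/L$. Everything else is a mechanical verification that uses the First isomorphism theorem as a black box, so I do not expect any serious obstacle beyond keeping these two coset bookkeeping steps straight.
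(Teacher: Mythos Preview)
Your proof is correct and is exactly the standard argument: the paper does not supply its own proof of this statement but simply cites \cite[Theorem 7.10]{r10}, and the textbook proof there proceeds precisely by constructing the surjection $M/L\to M/N$, checking well-definedness via $L\subseteq N$, identifying the kernel as $N/L$, and invoking the First Isomorphism Theorem.
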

\begin{thmC}[{\cite[Theorem 9.8]{r10}}] \label{thm:subfree}
  Let $\fR$ be $\bbZ$ or $\bbZ/p\bbZ$ for some prime $p$.
  Every submodule of a free $\fR$-module is free.
  Moreover, if an $\fR$-module $M$ is isomorphic to $\fR^n$, then every submodule $N$ of $M$ is isomorphic to $\fR^m$ for some $m \le n$.
  (In general, this holds for any principal ideal domain $\fR$.)
\end{thmC}
%If $p$ is not prime, a counterexample is given as follows.
%Consider $\bbZ/4\bbZ=\{[0],[1],[2],[3]\}$ as a $\bbZ/4\bbZ$-module and let $N$ be its submodule $\{[0],[2]\}$.
%If there were an isomorphism $f:N\rightarrow (\bbZ/4\bbZ)\underline{X}$ for some set $X$, the cardinality of $(\bbZ/4\bbZ)\underline{X}$ must be equal to the cardinality of $N$, 2.
%However, it can be seen $\#(\bbZ/4\bbZ)\underline{X} = 1+3\cdot\#X$.
%(For example, if $X=\emptyset$, $(\bbZ/4\bbZ)\underline{X}=\{0\}$, and if $X = \{\star\}$, $(\bbZ/4\bbZ)\underline{X}=\{0,[1]\underline\star,[2]\underline\star,[3]\underline{\star}\}$.)

Let $M$ be a left $\fR$-module.
For $S \subset M$, the set $\fR S$ of all elements in $M$ of the form $\sum_{i=1}^kr_is_i$ $(k\in\bbZ_{\ge 0},r_i \in\fR, s_i \in S)$ is a submodule of $M$.
If  $\fR S = M$, $S$ is called a \emph{generating set} of $S$ and the elements of $S$ are called \emph{generators} of $M$.
Let $S = \{s_i\}_{i\in I}$ be a generating set of $M$ for some indexing set $I$.
For a set $X = \{x_i\}_{i\in I}$, the linear map $\epsilon : \freemod{\fR}{X} \ni x_i \mapsto s_i \in M$ is a surjection from the free module $\freemod{\fR}{X}$.
The elements of $\ker \epsilon$, that is, elements $\sum_{x_i\in X}r_i\underline{x_i}$ satisfying $\epsilon(\sum_{x_i\in X}r_i\underline{x_i}) = \sum_{x_i\in X}r_i s_i = 0$, are called \emph{relations} of $M$.

Now, we introduce one of the most important notions to develop the homology theory of rewriting systems, \textit{free resolutions}.
We first start from the following example.
\begin{exa} \label{example:resolution}
Let $M$ be the $\bbZ$-module defined by
\[
\freemod{\bbZ}{\{a,b,c,d,e\}}/\bbZ\{\underline a+\underline b+\underline c-\underline d-\underline e,\ 2\underline b-\underline c,\ \underline a+2\underline c-\underline b-\underline d-\underline e\}.
\]
We consider the $\bbZ$-linear map between free $\bbZ$-modules $f_0 : \bbZ^3 \rightarrow \freemod{\bbZ}{\{a,b,c,d,e\}}$ defined by
\[
f_0(1,0,0) = \underline a+\underline b+\underline c-\underline d-\underline e,\
f_0(0,1,0) = 2\underline b - \underline c,\
f_0(0,0,1) = \underline a + 2\underline c-\underline b-\underline d-\underline e.
\]
We can see that the image of $f_0$ is the set of relations of $M$.
In other words, $\im f_0 = \ker \epsilon$ for the linear map $\epsilon : \freemod{\bbZ}{\{a,b,c,d,e\}}\rightarrow M$ which maps each element to its equivalence class.
Then, we consider the ``relations between relations'', that is,
triples $(n_1,n_2,n_3)$ which satisfy $f_0(n_1,n_2,n_3)=n_1(\underline a+\underline b+\underline c-\underline d-\underline e)+n_2(2\underline b-\underline c)+n_3(\underline a+2\underline c-\underline b-\underline d-\underline e) = 0$, or equivalently, elements of $\ker f_0$.
We can check $\ker f_0 = \{m(-1,1,1)\mid m \in \bbZ \}$.
This fact can be explained in terms of rewriting systems.
If we write relations in the form of rewrite rules
\[
A_1.\ \underline a+\underline b+\underline c\rrule \underline d+\underline e,\ A_2.\ 2\underline b \rrule \underline c,\  A_3.\ \underline a+2\underline c\rrule \underline b+\underline d+\underline e,
\]
we see $\{A_1,A_2,A_3\}$ is a complete rewriting system (over the signature $\{\underline a,\underline b,\underline c,\underline d,\underline e,+\}$) with two joinable critical pairs
\begin{center}
\includegraphics[scale=0.5]{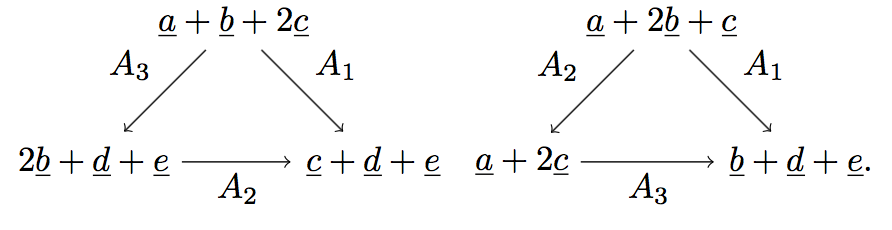}
\end{center}
We associate these critical pairs with an equality between formal sums $A_2 + A_3 = A_1$,
and it corresponds to
\[
  f_0(-1,1,1) = \underbrace{-(\underline a+\underline b+\underline c-\underline d-\underline e)}_{-A_1}+\underbrace{(2\underline b-\underline c)}_{A_2}+\underbrace{(\underline a+2\underline c-\underline b-\underline d-\underline e)}_{A_3} = 0.
\]
In fact, this correspondence between critical pairs and ``relations between relations'' is a key to the homology theory of TRSs.

We define a linear map $f_1 : \bbZ \rightarrow \bbZ^3$ by $f_1(1) = (-1,1,1)$
and then $f_1$ satisfies $\im f_1 = \ker f_0$.
We can go further, that is, we can consider $\ker f_1$, but it clearly turns out that $\ker f_1 = 0$.

We encode the above information in the following diagram:
\begin{equation} \label{eqn:ex_exact}
\bbZ \xrightarrow{f_1} \bbZ^3 \xrightarrow{f_0} \freemod{\bbZ}{\{a,b,c,d,e\}} \xrightarrow{\epsilon} M
\end{equation}
where $\im f_1 = \ker f_0, \im f_0 = \ker \epsilon$ and $\epsilon$ is surjective.
Sequences of modules and linear maps with these conditions are called free resolutions:
\end{exa}
\begin{defi}
  A sequence of left $\fR$-modules and $\fR$-linear maps
  \[
  \dotsb \xrightarrow{f_{i+1}} M_{i+1} \xrightarrow{f_i} M_{i} \xrightarrow{f_{i-1}} \dotsb
  \]
  is called an \emph{exact sequence} if $\im f_{i} = \ker f_{i-1}$ holds for any $i$.

  Let $M$ be a left $\fR$-module. For infinite sequence of free modules $F_i$ and linear maps $f_i : F_{i+1} \rightarrow F_{i}$, $\epsilon : F_0 \rightarrow M$, if the sequence
  \[
  \dotsb\xrightarrow{f_1} F_1 \xrightarrow{f_0} F_0 \xrightarrow{\epsilon} M
  \]
  is exact and $\epsilon$ is surjective, the sequence above is called a \emph{free resolution} of $M$.
  If the sequence is finite, it is called a \emph{partial free resolution}.

  (Exact sequences and free resolutions are defined for right $\fR$-modules in the same way.)
\end{defi}
Notice that the exact sequence (\ref{eqn:ex_exact}) can be extended to the infinite exact sequence
\[
\dotsb\rightarrow 0 \rightarrow \dotsb \rightarrow 0 \rightarrow
\bbZ \xrightarrow{f_1} \bbZ^3 \xrightarrow{f_0} \freemod{\bbZ}{\{a,b,c,d,e\}} \xrightarrow{\epsilon} M
\]
since $\ker f_1 = 0$.
Thus, the sequence (\ref{eqn:ex_exact}) is a free resolution of $M$.

As there are generally several rewriting systems equivalent to a given equational theory, free resolutions of $M$ are not unique.
However, we can construct some information of $M$ from a (partial) free resolution which does not depend on the choice of the free resolution.
The information is called \textit{homology groups}.
To define the homology groups, we introduce the tensor product of modules.
\begin{defi}
  Let $N$ be a right $\fR$-module and $M$ be a left $\fR$-module.
  Let $F(N\times M)$ be the free abelian group generated by $N\times M$.
  The \emph{tensor product} of $N$ and $M$, denoted by $\tensor{N}{\fR}{M}$, is the quotient group of $F(N\times M)$ by the subgroup generated by the elements of the form
  \[
    (x,y) + (x, y') - (x,y + y'),\
    (x,y) + (x',y) - (x+x',y),\
    (x\cdot r,y) - (x,r\cdot y)
  \]
  where $x,x' \in N$, $y,y' \in M$, $r \in R$.
  The equivalence class of $(x,y)$ in $\tensor{N}{\fR}{M}$ is written as $x \otimes y$.

  For a right $\fR$-module $N$ and a $\fR$-linear map $f : M \rightarrow M'$ between left $\fR$-modules $M,M'$, we write $N\otimes f : \tensor{N}{\fR}{M} \rightarrow \tensor{N}{\fR}{M'}$ for the map $(N\otimes f)(a\otimes x) = a \otimes f(x)$.
  $N\otimes f$ is known to be well-defined and be a group homomorphism.
\end{defi}
Let $\dotsb\xrightarrow{f_1}F_1 \xrightarrow{f_0} F_0 \xrightarrow{\epsilon} M$ be a free resolution of a left $\fR$-module $M$.
For a right $\fR$-module $N$, we consider the sequence
\begin{equation} \label{eqn:tensor_free_res}
\dotsb\xrightarrow{N\otimes f_1}\tensor{N}{\fR}{F_1} \xrightarrow{N\otimes f_0} \tensor{N}{\fR}{F_0}.
\end{equation}
Then, it can be shown that $\im (N\otimes f_i) \subset \ker (N\otimes f_{i-1})$ for any $i = 1,2,\dotsc$.
In general, a sequence $\dotsb\xrightarrow{f_{i+1}} M_{i+1} \xrightarrow{f_i} M_{i} \xrightarrow{f_{i-1}} \dotsb$ of left/right $\fR$-modules satisfying $\im f_i \subset \ker f_{i-1}$ for any $i$ is called a \textit{chain complex}.
The homology groups of a chain complex are defined to be the quotient group of $\ker f_{i-1}$ by $\im f_{i}$:
\begin{defi}
  Let $(C_\bullet,f_\bullet)$ denote the pair $(\{C_i\}_{i=0,1,\dots},\{f_i : C_{i+1} \rightarrow C_{i}\}_{i=0,1,\dots})$.
  For a chain complex $\dotsb\xrightarrow{f_{i+1}} C_{i+1} \xrightarrow{f_i} C_{i} \xrightarrow{f_{i-1}} \dotsb$, the abelian group $H_j(C_\bullet,f_\bullet)$ defined by
  \[
  H_j(C_\bullet,f_\bullet) = \ker f_{j-1} / \im f_{j}
  \]
  is called the \emph{$j$-th homology groups} of the chain complex $(C_\bullet,f_\bullet)$.
\end{defi}
The homology groups of the chain complex (\ref{eqn:tensor_free_res}) depend only on $M$, $N$, and $\fR$:
\begin{thmC}[{\cite[Corollary 6.21]{r09}}]\label{thm:hominv}
  Let $M$ be a left $\fR$-module and $N$ be a right $\fR$-module.
  For any two resolutions $\dotsb\xrightarrow{f_1}F_1 \xrightarrow{f_0} F_0 \xrightarrow{\epsilon} M$, $\dotsb\xrightarrow{f'_1}F'_1 \xrightarrow{f'_0} F'_0 \xrightarrow{\epsilon}M$, we have a group isomorphism
  \[
  H_j(\tensor{N}{\fR}{F_\bullet},N\otimes f_\bullet) \cong H_j(\tensor{N}{\fR}{F_\bullet'},N\otimes f_\bullet').
  \]
\end{thmC}
We end this subsection by giving some basic facts on exact sequences.
\begin{propC}[{\cite[Proposition 7.20 and 7.21]{r10}}] \label{thm:exactbasic} \leavevmode
  \begin{enumerate}
    \item $M_1 \xrightarrow{f} M_2 \rightarrow 0$ is exact if and only if $\ker f = 0$.
    \item $0 \rightarrow M_1 \xrightarrow{f} M_2$ is exact if and only if $\im f = M_2$.
    \item If $M_1$ is a submodule of $M_2$, the sequence $0 \rightarrow M_2 \xrightarrow{\iota} M_1 \xrightarrow{\pi} M_1/M_2 \rightarrow 0$ is exact where $\iota$ is the inclusion map $\iota(x) = x$ and $\pi$ is the projection $\pi(x) = [x]$.
  \end{enumerate}
\end{propC}
\begin{prop} \label{thm:split}
  Suppose we have an exact sequence of $\fR$-modules $0 \rightarrow M_1 \rightarrow M_2 \rightarrow M_3 \rightarrow 0$.
  If $M_3$ is free, then $M_2$ is isomorphic to $M_1\times M_3$.
\end{prop}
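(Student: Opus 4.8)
The plan is to prove this splitting result by constructing an explicit $\fR$-linear section of the surjection onto $M_3$ and then assembling from it an isomorphism $M_1 \times M_3 \to M_2$. Write the exact sequence as $0 \to M_1 \xrightarrow{\alpha} M_2 \xrightarrow{\beta} M_3 \to 0$. By the basic facts on exact sequences (Proposition \ref{thm:exactbasic}), exactness tells us that $\alpha$ is injective, that $\beta$ is surjective, and that $\im \alpha = \ker \beta$; these three facts are all I will use about the sequence.

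First I would build a section $\sigma : M_3 \to M_2$ of $\beta$, that is, an $\fR$-linear map with $\beta \circ \sigma = \Id_{M_3}$, and this is exactly where freeness of $M_3$ enters. Since $M_3$ is free it has a basis $\{b_i\}_{i\in I}$, and since $\beta$ is surjective I may choose, for each $i$, an element $m_i \in M_2$ with $\beta(m_i) = b_i$. By the universal property of free modules recalled earlier in this section --- an $\fR$-linear map out of a free module is uniquely determined by arbitrarily prescribed values on a basis --- there is a unique $\fR$-linear map $\sigma : M_3 \to M_2$ with $\sigma(b_i) = m_i$ for all $i$. The two $\fR$-linear maps $\beta\circ\sigma$ and $\Id_{M_3}$ agree on the basis $\{b_i\}$, hence they are equal, so $\sigma$ is the desired section.

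Next I would define $\phi : M_1 \times M_3 \to M_2$ by $\phi(x,y) = \alpha(x) + \sigma(y)$, which is $\fR$-linear because $\alpha$ and $\sigma$ are. To see $\phi$ is injective, suppose $\alpha(x)+\sigma(y)=0$; applying $\beta$ and using $\beta\circ\alpha = 0$ (which holds since $\im\alpha = \ker\beta$) together with $\beta\circ\sigma = \Id_{M_3}$ gives $y = 0$, whence $\alpha(x)=0$ and so $x=0$ by injectivity of $\alpha$. To see $\phi$ is surjective, take $m \in M_2$ and set $y = \beta(m)$; then $\beta(m - \sigma(y)) = \beta(m) - y = 0$, so $m - \sigma(y) \in \ker\beta = \im\alpha$, giving $m - \sigma(y) = \alpha(x)$ for some $x \in M_1$ and hence $\phi(x,y) = m$. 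Thus $\phi$ is an isomorphism and $M_2 \cong M_1\times M_3$.

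The only genuinely non-formal step is the construction of $\sigma$, and this is precisely the point at which the hypothesis that $M_3$ is free is indispensable: for a general $M_3$ there need be no section and the sequence need not split. Everything after that is a routine diagram chase, so I expect the write-up to be short. The main care will go into invoking the universal property of free modules over a possibly infinite basis, so that the choice of the $m_i$ and the extension of $\sigma$ are phrased over an arbitrary index set rather than only the finite case stated in the example.
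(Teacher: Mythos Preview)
Your argument is correct and is exactly the standard proof of the splitting lemma: use freeness of $M_3$ to build an $\fR$-linear section of $\beta$, then check that $(x,y)\mapsto\alpha(x)+\sigma(y)$ is an isomorphism. The paper does not give a proof at all but simply cites \cite[Proposition 7.22]{r10}, which is precisely this splitting result, so your write-up is the unfolded version of what the paper defers to the reference.
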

The proof is given by using \cite[Proposition 7.22]{r10}.

\subsection{String Rewriting Systems and Homology Groups of Monoids}
For an alphabet $\Sigma$, $\Sigma^*$ denotes the set of all strings of symbols over $\Sigma$.
The set $\Sigma^*$ forms a monoid under the operation of concatenation with the empty string serving as the identity, and we call $\Sigma^*$ the free monoid generated by $\Sigma$.
For a string rewriting system (SRS) $(\Sigma,R)$, we write $\facmon{\Sigma}{R}$ for the set defined by $\facmon{\Sigma}{R} = \Sigma^*\quot\rewrst{R}$.
We can see $\facmon{\Sigma}{R}$ is a monoid under the operations $[u]\cdot[v]=[uv]$ where $[w]$ denotes the equivalence class of $w \in \Sigma^*$ with respect to $\rewrst{R}$.

We say that two SRSs $(\Sigma_1,R_1),(\Sigma_2,R_2)$ are \emph{Tietze equivalent} if the monoids $\facmon{\Sigma_1}{R_1}$, $\facmon{\Sigma_2}{R_2}$ are isomorphic.
It is not difficult to show that for any two SRSs $(\Sigma,R_1),(\Sigma,R_2)$ with the same signature, if $R_1$ and $R_2$ are equivalent (i.e., $\rewrst{R_1}\,=\,\rewrst{R_2}$), then $(\Sigma,R_1)$ and $(\Sigma,R_2)$ are isomorphic.
Roughly speaking, the notion that two SRSs are isomorphic means that the SRSs are equivalent but their alphabets can be different.
For example, let $\Sigma_1$ be $\{a,b,c\}$ and $R_1$ be $\{abb\rrule ab,\ ba \rrule c\}$.
Then, $(\Sigma_1,R_1)$ is isomorphic to $(\Sigma_2,R_2)$ where $\Sigma_2 = \{a,b\}$ and $R_2 = \{abb\rrule ab\}$.
Intuitively, since $c$ is equivalent to $ba$ with respect to the congruence $\rewrst{R_1}$, $c$ is redundant as long as we consider strings modulo $\rewrst{R_1}$ and $(\Sigma_2,R_2)$ is the SRS made by removing $c$ from $(\Sigma_1,R_1)$.

If a monoid $S$ is isomorphic to $\facmon{\Sigma}{R}$ for an SRS $(\Sigma,R)$, we call $(\Sigma,R)$ a \emph{presentation} of the monoid $S$.

Let $S$ be a monoid and consider the free $\bbZ$-module $\freemod{\bbZ}{S}$.
The module $\freemod{\bbZ}{S}$ can be equipped with a ring structure under the multiplication
$\left(\sum_{w\in S}n_w\underline w\right)\left(\sum_{w\in S}m_w\underline w\right) = \sum_{w,v\in S}n_wm_v\underline{wv}$
where $n_wm_v$ is the usual multiplication of integers and $wv$ is the multiplication of the monoid $S$.
$\freemod{\bbZ}{S}$ as a ring is called the \emph{integral monoid ring} of $S$.
When we think of $\freemod{\bbZ}{S}$ as a ring, we write $\imonring{S}$ instead of $\freemod{\bbZ}{S}$.

We consider $\imonring{S}$-modules.
The group of integers $\bbZ$ forms a left (resp. right) $\imonring{S}$-module under the scalar multiplication $(\sum_{w\in S}n_w\underline w)\cdot m = \sum_{w\in S}n_w m$ (resp. $m \cdot (\sum_{w\in S}n_w\underline w) = \sum_{w\in S}n_w m \underline w$).
Let $\dotsb\xrightarrow{\partial_1}F_1\xrightarrow{\partial_0}F_0\xrightarrow{\epsilon}\bbZ$
be a free resolution of $\bbZ$ over the ring $\imonring{S}$.
The abelian group $H_i(S)$ is defined as the $i$-th homology group of the chain complex $(\tensor{\bbZ}{\imonring{S}}{F_\bullet},\bbZ\otimes\partial_\bullet)$, i.e.,
\[H_i(S) = H_i(\tensor{\bbZ}{\imonring{S}}{F_\bullet},\bbZ\otimes\partial_\bullet)
= \ker\bbZ\otimes\partial_{i-1}/\im\bbZ\otimes\partial_{i}.
\]
If $S$ is isomorphic to $\facmon{\Sigma}{R}$ for some SRS $(\Sigma,R)$, it is known that there is a free resolution in the form of
\[
\dotsb \rightarrow
\freemod{(\imonring{S})}{P} \xrightarrow{\partial_2} \freemod{(\imonring{S})}{R} \xrightarrow{\partial_1} \freemod{(\imonring{S})}{\Sigma} \xrightarrow{\partial_0} \freemod{(\imonring{S})}{\{\star\}} \xrightarrow{\epsilon} \bbZ
\]
for some set $P$.
Squier \cite{s87} showed that if the SRS $(\Sigma,R)$ is complete and reduced\footnote{An SRS $(\Sigma,R)$ is reduced if for each $l\rrule r \in R$, $r$ is normal w.r.t. $\rightarrow_{R}$ and there does not exist $l'\rrule r'\in R$ such that $l' = ulv \neq l$ for some $u,v\in\Sigma^*$},
 there is $\partial_2 : \freemod{(\imonring{S})}{P} \rightarrow \freemod{(\imonring{S})}{R}$
for $P = (\text{the critical pairs of $R$})$ so that we can compute $H_2(S)=\ker \partial_1/\im\partial_2$ explicitly.
This is an analog of Example \ref{example:resolution}, but we omit the details here.
For an abelian group $G$, let $s(G)$ denote the minimum number of generators of $G$
(i.e., the minimum cardinality of the subset $A \subset G$ such that any element $x \in G$ can be written by $x = a_1 + \dotsb + a_k - a_{k+1} - \dotsb - a_m$ for $a_1,\dotsc,a_m \in A$).
Then, we have the following theorem:
\begin{thmC}[\cite{u95}] \label{thm:srs_ineq}
  Let $(\Sigma,R)$ be an SRS and $S = \facmon{\Sigma}{R}$. Then
  $\#\Sigma \ge s(H_1(S))$, $\#R \ge s(H_2(S))$.
\end{thmC}
To prove this theorem, we use the following lemma:
\begin{lem}
  Let $X$ be a set. The group homomorphism $\tensor{\bbZ}{\imonring{S}}{\freemod{(\imonring{S})}{X}} \rightarrow \freemod{\bbZ}{X}$,
  $n\langle w\rangle\underline x \mapsto n\underline x$ is an isomorphism.
\end{lem}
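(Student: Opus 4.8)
The plan is to realize the stated map as the homomorphism induced by a balanced map through the universal property of the tensor product, and then to exhibit an explicit two-sided inverse. Write $\varepsilon\colon \imonring{S}\to\bbZ$ for the augmentation $\varepsilon\left(\sum_{w}n_w\langle w\rangle\right)=\sum_{w}n_w$; this is a ring homomorphism, and by the definitions of the previous subsection the right $\imonring{S}$-module structure on $\bbZ$ is exactly $n\cdot r=\varepsilon(r)\,n$. Throughout I use that $\freemod{(\imonring{S})}{X}=(\imonring{S})\underline{X}$ is generated as an abelian group by the elements $\langle w\rangle\underline x$ with $w\in S$, $x\in X$.

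First I would construct the forward map. Consider $b\colon \bbZ\times\freemod{(\imonring{S})}{X}\to\freemod{\bbZ}{X}$ sending $\left(n,\sum_{x}r_x\underline x\right)$ to $\sum_{x}\varepsilon(r_x)\,n\,\underline x$. One checks $b$ is biadditive and $\imonring{S}$-balanced: for $r\in\imonring{S}$ the required identity $b(n\cdot r,\xi)=b(n,r\cdot\xi)$, after expanding $\xi=\sum_x r_x\underline x$, reduces to $\varepsilon(r)\varepsilon(r_x)=\varepsilon(r\,r_x)$, which holds because $\varepsilon$ is multiplicative. By the universal property of $\otimes_{\imonring{S}}$, $b$ factors through a group homomorphism $\phi\colon \tensor{\bbZ}{\imonring{S}}{\freemod{(\imonring{S})}{X}}\to\freemod{\bbZ}{X}$, and on generators $\phi(n\otimes\langle w\rangle\underline x)=\varepsilon(\langle w\rangle)\,n\,\underline x=n\underline x$, matching the stated formula.

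Next I would define the candidate inverse $\psi\colon\freemod{\bbZ}{X}\to\tensor{\bbZ}{\imonring{S}}{\freemod{(\imonring{S})}{X}}$ by $\psi(n\underline x)=n\otimes\langle e\rangle\underline x$, where $e$ is the identity of $S$, extended $\bbZ$-linearly over the free abelian group $\freemod{\bbZ}{X}$. That $\phi\circ\psi=\Id$ is immediate since $\varepsilon(\langle e\rangle)=1$. The substance is $\psi\circ\phi=\Id$, and this is the step I expect to be the main obstacle, because it is precisely where the module relation defining the tensor product must be invoked to collapse the monoid factors. Concretely, for a simple tensor $n\otimes\xi$ with $\xi=\sum_x r_x\underline x$ and $r_x=\sum_w c_{x,w}\langle w\rangle$, the defining relation $(n\cdot r)\otimes y=n\otimes(r\cdot y)$ gives $n\otimes\langle w\rangle\underline x=(n\cdot\langle w\rangle)\otimes\underline x=\varepsilon(\langle w\rangle)\,n\otimes\underline x=n\otimes\langle e\rangle\underline x$; summing over $w,x$ then yields $n\otimes\xi=\sum_x\varepsilon(r_x)\,n\otimes\langle e\rangle\underline x=\psi(\phi(n\otimes\xi))$. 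Since $\phi$ and $\psi$ are group homomorphisms agreeing with the identity on generators, they are mutually inverse, so $\phi$ is the claimed isomorphism.

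Alternatively, one could argue abstractly via $\tensor{\bbZ}{\imonring{S}}{\freemod{(\imonring{S})}{X}}\cong\bigoplus_{x\in X}\left(\tensor{\bbZ}{\imonring{S}}{\imonring{S}}\right)\cong\bigoplus_{x\in X}\bbZ\cong\freemod{\bbZ}{X}$, using that tensoring commutes with direct sums and that $\tensor{\bbZ}{\imonring{S}}{\imonring{S}}\cong\bbZ$ via the right action. Since these facts are not recorded in the excerpt, I would prefer the explicit construction above, whose only delicate point remains the collapse of the factors $\langle w\rangle$ through the balancing relation.
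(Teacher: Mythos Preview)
Your proof is correct. The paper itself does not give a detailed argument for this lemma; it simply records that ``this lemma is proved in a straightforward way'' and moves on. Your explicit construction via the augmentation $\varepsilon$, the balanced map $b$, and the inverse $\psi(n\underline{x})=n\otimes\langle e\rangle\underline{x}$ is exactly the kind of direct verification the paper has in mind, and the key collapse $n\otimes\langle w\rangle\underline{x}=(n\cdot\langle w\rangle)\otimes\underline{x}=n\otimes\underline{x}$ via the balancing relation is the only nontrivial step. The alternative sketch through $\tensor{\bbZ}{\imonring{S}}{\imonring{S}}\cong\bbZ$ and compatibility of tensor with direct sums is also a standard route and would equally qualify as ``straightforward'' in the paper's sense.
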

This lemma is proved in a straightforward way.
\begin{proof}[Proof of Theorem \ref{thm:srs_ineq}]
  Since $\tensor{\bbZ}{\imonring{S}}{(\freemod{\imonring{S})}{X}} \cong \freemod{\bbZ}{X}$ by the above lemma, $s(\tensor{\bbZ}{\imonring{S}}{(\freemod{\imonring{S})}{X}}) = s(\freemod{\bbZ}{X}) = \#X$.
  For any set $Y$ and group homomorphism $f : \freemod{\bbZ}{X} \rightarrow \freemod{\bbZ}{Y}$, since $\ker f$ is a subgroup of $\freemod{\bbZ}{X}$, we have $\#X \ge s(\ker f)$.
  For any subgroup $H$ of $\ker f$, $\ker f/H$ is generated by $[x_1],\dotsc,[x_k]$ if $\ker f$ is generated by $x_1,\dotsc,x_k$.
  Thus $\#\Sigma \ge s(\ker \partial_0/\im\partial_1) = s(H_1(S))$, $\#R \ge s(\ker\partial_1/\im\partial_2) = s(H_2(S))$.
\end{proof}
Note that $H_i(S)$ does not depend on the choice of presentation $(\Sigma,R)$ by Theorem \ref{thm:hominv}.
Therefore, Theorem \ref{thm:srs_ineq} can be restated as follows:
Let $(\Sigma,R)$ be an SRS. For any SRS $(\Sigma',R')$ isomorphic to $(\Sigma,R)$, the number of symbols $\#\Sigma'$ is bounded below by $s(H_1(\facmon{\Sigma}{R}))$ and the number of rules $\#R'$ is bounded below by $s(H_2(\facmon{\Sigma}{R}))$.

%\section{Equivalences among TRS}
%\input{equivalences.tex}

\section{An Overview of the Homology Theory of Algebraic Theories}
\label{section:homology}
In this section, we will briefly see the homology theory of algebraic theories, which is the main tool to obtain our lower bounds.

We fix a signature $\Sigma$.
Let $t = \langle t_1,\dotsc,t_n\rangle$ be a $n$-tuple of terms and suppose that for each $t_i$, the set of variables in $t_i$ is included in $\left\{x_1,\dots,x_m\right\}$.
For an $m$-tuple of term $s=\langle s_1,\dotsc,s_m\rangle$, we define the composition of $t$ and $s$ by
\[
t\circ s = \langle t_1[s_1/x_1,\dotsc,s_m/x_m],\dotsc,t_n[s_1/x_1,\dotsc,s_m/x_m]\rangle
\]
where $t_i[s_1/x_1,\dotsc,s_m/x_m]$ denotes the term obtained by substituting $s_j$ for $x_j$ in $t_i$ for each $j=1,\dotsc,m$ in parallel.
(For example, $f(x_1,x_2)[g(x_2)/x_1,g(x_1)/x_2] = f(g(x_2),g(x_1))$.)
By this definition, we can think of any $m$-tuple $\langle s_1,\dotsc,s_m\rangle$ of terms as a (parallel) substitution $\left\{x_1 \mapsto s_1,\dotsc,x_m\mapsto s_m\right\}$.
Recall that, for a TRS $R$, the reduction relation $\rightarrow_R$ between terms is defined as
$t_1\rightarrow_R t_2 \iff t_1=C[l\circ s],\ t_2=C[r\circ s]$
for some single-hole context $C$, $m$-tuple $s$ of terms, and rewrite rule $l \rrule r\in R$ whose variables are included in $\{x_1,\dotsc,x_m\}$.
This definition suggests that the pair of a context $C$ and an $m$-tuple of terms (or equivalently, substitution) $s$ is useful to think about rewrite relations.
Malbos and Mimram \cite{mm16} called the pair of a context and an $m$-tuple of terms a \textit{bicontext}.
For a bicontext $(C,t)$ and a rewrite rule $A$, we call the triple $(C,A,t)$ a \textit{rewriting step}.
The pair of two rewriting steps $(\square,l_1\rrule r_1,s),(C,l_2\rrule r_2,t)$ is called a \textit{critical pair} if the pair $(r_1\circ s, C[r_2\circ t])$ of terms is a critical pair in the usual sense given by $l_1 \rightarrow r_1$, $l_2\rightarrow r_2$.

The composition of two bicontexts $(C,t),(D,s)$ ($t=\langle t_1,\dotsc,t_n\rangle$, $s=\langle s_1,\dotsc,s_m\rangle$) is defined by
\[
(C,t)\circ (D,s)=(C[D\circ t],s\circ t)
\]
where $D \circ t = D[t_1/x_1,\dotsc,t_n/x_n]$ and note that the order of composition is reversed in the second component.
We write $\bbK(n,m)$ ($n,m\in \bbN$) for the set of bicontexts $(C,t)$ where $t=\langle t_1,\dotsc,t_n\rangle$ and each $t_i$ and $C$ have variables in $\{x_1,\dotsc,x_m\}$ (except $\square$ in $C$).

To apply homological algebra to TRSs, we construct an algebraic structure from bicontexts.
For two natural numbers $n,m$, we define $\imonring{\bbK}(n,m)$ to be the free abelian group generated by $\bbK(n,m)$
(i.e., any element in $\imonring{\bbK}(n,m)$ is written in the form of formal sum $\sum_{(C,t)\in\bbK(n,m)}\lambda_{(C,t)}(C,t)$ where each $\lambda_{(C,t)}$ is in $\bbZ$ and is equal to $0$ except for finitely many $(C,t)$s.)
Then, the composition $\circ : \bbK(n,m)\times \bbK(k,n)\rightarrow \bbK(k,m)$ can be extended to $\circ : \imonring{\bbK}(n,m)\times \imonring{\bbK}(k,n)\rightarrow \imonring{\bbK}(k,m)$ by
\[
\left(\sum_{(C,t)}\lambda_{(C,t)}(C,t)\right)\circ
\left(\sum_{(D,s)}\mu_{(D,s)}(D,s)\right)
= \sum_{(C,t)}\sum_{(D,s)}\lambda_{(C,t)}\mu_{(D,s)}((C,t)\circ(D,s)).
\]
This family of free abelian groups forms a structure called \emph{ringoid}.
%
%To apply homological algebra to TRSs, we construct an algebraic structure from $\bbK$.
%We write $\imonring{\bbK}$ for the (small) category whose objects are natural numbers, set of morphisms $(\imonring{\bbK})(n,m)$ is the free abelian group generated by $\bbK(n,m)$ (i.e., any element in $(\imonring{\bbK})(n,m)$ is written in the form of formal sum $\sum_{(C,t)\in\bbK(n,m)}\lambda_{(C,t)}(C,t)$ where each $\lambda_{(C,t)}$ is in $\bbZ$ and is equal to $0$ except for finitely many $(C,t)$s).
%The composition on $\imonring{\bbK}$ is defined by
%\[
%\left(\sum_{(C,t)}\lambda_{(C,t)}(C,t)\right)\circ
%\left(\sum_{(D,s)}\mu_{(D,s)}(D,s)\right)
%= \sum_{(C,t)}\sum_{(D,s)}\lambda_{(C,t)}\mu_{(D,s)}((C,t)\circ(D,s)).
%\]
\begin{defi}
  Suppose an abelian group $(\cR(i,j),+_{i,j},0_{i,j})$ is defined for each $i,j \in \bbN$.
  If for each $i,j,k \in \bbN$, a map $\circ_{i,j,k}: \cR(j,k)\times \cR(i,j) \rightarrow \cR(i,k)$ is defined and satisfies the following conditions, $\cR$ is called a \emph{ringoid} (also called a \emph{small $\mathbf{Ab}$-enriched category}).
  \begin{enumerate}
    \item For each $i$, there exists an element $1_i \in \cR(i,i)$ such that $a \circ_{i,i,j} 1_i = a$, $1_i \circ_{j,i,i} b = b$ ($j \in \bbN$, $a \in \cR(i,j)$, $b \in \cR(j,i)$),
    \item $(a \circ_{j,k,l} b) \circ_{i,j,l} c = a \circ_{i,k,l} (b \circ_{i,j,k} c)$ ($a \in \cR(k,l)$, $b\in \cR(j,k)$, $c \in \cR(i,j)$),
    \item $(a+_{j,k}b)\circ_{i,j,k}c = a\circ_{i,j,k} c +_{i,k} b\circ_{i,j,k} c$ ($a,b\in \cR(j,k)$, $c\in\cR(i,j)$),
    \item $a\circ_{i,j,k}(b+_{i,j}c) = a\circ_{i,j,k}b +_{i,k} a\circ_{i,j,k} c$ ($a\in \cR(j,k)$, $b,c\in\cR(i,j)$),
    \item $a\circ_{i,j,k}0_{i,j} = 0_{i,k} = 0_{j,k}\circ_{i,j,k}b$ ($a\in\cR(j,k)$, $b\in\cR(i,j)$).
  \end{enumerate}
  We will omit subscripts of $+,\circ,0,1$ if there is no confusion.
\end{defi}
The notion of modules over a ring is extended to modules over a ringoid.
\begin{defi}
Let $\cR$ be a ringoid.
Suppose that for each $i\in \bbN$, an abelian group $(M(i),+_i,0_i)$ is defined.
If there is a map $\cdot_{i,j} : \cR(i,j)\times M(i) \rightarrow M(j)$ satisfying the following conditions, $M$ is called a \emph{left $\cR$-module}.
\begin{enumerate}
  \item $(a\circ_{i,j,k} b)\cdot_{i,k} x = a\cdot_{j,k}(b\cdot_{i,j}x)$ ($a\in\cR(j,k)$, $b\in\cR(i,j)$, $x\in M(i)$),
  \item $1_i\cdot_{i,i}x = x$ ($x\in M(i))$),
  \item $(a+_{i,j}b)\cdot_{i,j}x = (a\cdot_{i,j}x)+_{j}(b\cdot_{i,j}x)$ ($a,b\in\cR(i,j)$, $x\in M(i)$),
  \item $a\cdot_{i,j} (x +_i y) = (a\cdot_{i,j} x) +_j (a \cdot_{i,j} y)$ ($a \in \cR(i,j)$, $x,y \in M(i)$),
  \item $0_{i,j} \cdot_{i,j} x = 0_j$ ($x\in M(i)$).
\end{enumerate}

A \emph{right $\cR$-module} $M$ is also defined with a map $\cdot_{i,j} : M(i)\times \cR(i,j)\rightarrow M(j)$ in the same manner with right modules over a ring.

An \emph{$\cR$-linear map} $f : M \rightarrow M'$ between left $\cR$-modules $M, M'$ is a collection of group homomorphisms $f_i : M(i) \rightarrow M'(i)$ ($i\in \bbN$) that satisfy
\[
f_j(a\cdot_{i,j} x)=a\cdot_{i,j} f_i(x) \quad (a \in \cR(i,j),x \in M(i)).
\]
\end{defi}
Ringoids and modules over ringoids are originally defined in a category theoretic way (\cf~\cite{m72,mm16}).
(A ringoid is a small Ab-enriched category, and a module over a ringoid is an additive functor.)
Our definitions here are obtained by unfolding the category theoretic terminology in the original definitions so that those who are not familiar with category theory can understand them more easily.
\begin{defi}
  Let $\cR$ be a ringoid and $P$ be a family of sets $P_i$ ($i \in \bbN$).
  The free left $\cR$-module generated by $P$, denoted by $\freemod{\cR}{P}$ is defined as follows.
  For each $i \in \bbN$, $(\freemod{\cR}{P})(i)$ is the abelian group of formal finite sums
  \[
    \sum_{p_j\in P_j,\ j \in \bbN}a_{p_j}\underline{p_j},\quad (a_{p_j}\in \cR(j,i))
  \]
  and for each $r \in \cR(i,k)$,
  \[
    r\cdot\left(\sum_{p_j\in P_j,\ j\in \bbN}a_{p_j}\underline{p_j}\right)
    = \sum_{p_j\in P_j,\ j \in \bbN}(r\circ a_{p_j})\underline{p_j}.
  \]

If a left $\cR$-module $M$ is isomorphic to $\freemod{\cR}{P}$ for some $P$,
we say that $M$ is free.
\end{defi}
For $\imonring{\bbK}$, we write $C\underline x t$ for elements of $(\freemod{(\imonring{\bbK})}{P})(X)$ instead of $(C,t)\underline x$,
and $(D+C)\underline x t$ for $D\underline x t + C\underline x t$.

The tensor product of two modules over a ringoid is also defined.
\begin{defiC}[\cite{mm16}]
  Let $\cR$ be a ringoid, $M_1$ be a right $\cR$-module, and $M_2$ be a left $\cR$-module.
  For a family of groups $\{G_X\mid X \in P\}$ for some indexing set $P$, its direct sum, denoted by $\bigoplus_{X\in P}G_X$, is the subset of the direct product defined by $\{(g_X)_{X\in P}\in\prod_{X\in P}G_X\mid \text{$g_X=0$ except for finite $X$s}\}$.
  The direct sum of groups also forms a group.

  The tensor product $\tensor{M_1}{\cR}{M_2}$ is the quotient abelian group of
  $\bigoplus_{X\in\cR}\tensor{M_1(X)}{\cR(X,X)}{M_2(X)}$
  by relations $(x\cdot a)\otimes y - x\otimes (a\cdot y)$ for all $a \in \cR(Y,X)$, $x\in M(X), y\in M(Y)$.
\end{defiC}

We define an equivalence between two TRSs $(\Sigma,R)$, $(\Sigma',R')$, called \emph{Tietze equivalence}.
\begin{defi}
\label{def:tietze}
  Two TRSs are \emph{Tietze equivalent} if one is obtained from the other by applying a series of \emph{Tietze transformations} defined as follows:
\begin{enumerate}
  \item If $f^{(n)}$ is a symbol not in $\Sigma$ and $t\in T(\Sigma)$ has variables in $\{x_1,\dots,x_n\}$, then $(\Sigma,R)$ can be transformed into $(\Sigma\cup \{f\}, R\cup\{t \rrule f(x_1,\dots,x_n)\})$.
  \item If $t \rrule f(x_1,\dots,x_n)\in R$, $t \in T(\Sigma\setminus \{f\})$, and $f$ does not occur in any rule in $R'=R\setminus \{t\rrule f(x_1,\dots,x_n)\}$, then $(\Sigma,R)$ can be transformed into $(\Sigma\setminus\{f\}, R')$.
  \item If $t \xleftrightarrow{*}_R s$, then $(\Sigma,R)$ can be transformed into $(\Sigma,R\cup\{t\rrule s\})$.
  \item If $t\rrule s\in R$ and $t \xleftrightarrow{*}_{R'} s$ for $R'=R\setminus\{t\rrule s\}$, then $(\Sigma,R)$ can be transformed into $(\Sigma,R')$.
\end{enumerate}
\end{defi}
We can see that any two TRSs $(\Sigma,R_1)$,$(\Sigma,R_2)$ are Tietze equivalent if they are equivalent in the usual sense, $\rewrst{R_1}\,=\,\rewrst{R_2}$.
Tietze equivalence is originally introduced in group theory \cite[\S 11]{t1908}
and is also defined for monoids \cite[7.2]{bo93}.
\begin{exa}
  Consider the signature $\Sigma = \{+^{(2)},S^{(1)},0^{(0)}\}$ and the set $R$ of four rules
  \[
    0 + x \rrule x,\ x + 0 \rrule x,\ S(x) + y\rrule S(x + y),\ (x + y) + z \rrule x + (y + z).
  \]
  We can see $(\Sigma,R)$ is Tietze equivalent to $(\Sigma',R')$ where
  \[
   \Sigma' = \{+^{(2)},0^{(0)},1^{(0)}\},\ R' = \{0 + x \rrule x,\ x + 0 \rrule x,\ (x + y) + z \rrule x + (y + z)\}
   \]
   as follows:
   \begin{align*}
     (\Sigma,R)
     &\xlongrightarrow{\rm (1)} (\Sigma\uplus \{1^{(0)}\}, R\uplus\{S(0) \rrule 1\})\\
     &\xlongrightarrow{\rm (3)} (\Sigma\uplus\{1^{(0)}\}, R\uplus\{S(0) \rrule 1,\ 1 + x\rrule S(x)\})\\
     &\xlongrightarrow{\rm (4)} (\Sigma\uplus\{1^{(0)}\}, R\uplus\{1+x\rrule S(x)\})\\
     &\xlongrightarrow{\rm (4)} (\Sigma\uplus\{1^{(0)}\}, R\uplus\{1+x\rrule S(x)\}\setminus\{S(x)+y\rrule S(x+y)\})\\
     &\xlongrightarrow{\rm (2)} (\Sigma\uplus\{1^{(0)}\}\setminus\{S^{(1)}\}, R\setminus\{S(x)+y\rrule S(x+y)\})
     = (\Sigma',R').
   \end{align*}
\end{exa}

Now, we outline Malbos-Mimram's construction of the homology groups of TRSs.
Let $d=\deg(R)$.
\begin{enumerate}
  \item We begin by defining a new ringoid from $\imonring{\bbK}$.
  That ringoid, denoted by $\overline{\imonring{\bbK}}^{(\Sigma,R)}$,
  depends only on the Tietze equivalence class of $(\Sigma,R)$.
  $\overline{\imonring{\bbK}}^{(\Sigma,R)}$ corresponds to $\imonring{\facmon{\Sigma}{R}}$ in the case $(\Sigma,R)$ is an SRS.
  \item From this step, we write $\cR$ for $\overline{\imonring{\bbK}}^{(\Sigma,R)}$.
  It can be shown that we have a partial free resolution
  \begin{equation}\label{eqn:main_resol}
  \freemod{\cR}{\bfP_3}
  \xrightarrow{\partial_2} \freemod{\cR}{\bfP_2}
  \xrightarrow{\partial_1} \freemod{\cR}{\bfP_1}
  \xrightarrow{\partial_0} \freemod{\cR}{\bfP_0} \xrightarrow{\epsilon} \cZ
  \end{equation}
  where every $\bfP_i$ is a family of sets $(\bfP_i)_j$ given by $(\bfP_0)_1 = \{1\}$, $(\bfP_0)_j = \emptyset$ ($j \neq 1$), $(\bfP_1)_j = \Sigma^{(j)} = \{f\in\Sigma\mid\text{$f$ is of arity $j$}\}$, $(\bfP_2)_j = \{l\rrule r\in R\mid \text{$l$ is of arity $j$}\}$,
  $(\bfP_3)_j = \{((\square,A,s),(C,B,t))\text{ : critical pair} \mid \text{one of $A,B$ is in $(\bfP_2)_j$, and the other is in } (\bfP_2)_k\\ \text{ for } k\le j\}$.
  \item By taking the tensor product $\bbZ/d\bbZ\otimes_\cR-$, we have the chain complex
    \newcommand{\squeezexrightarrow}[1]{%
      \xrightarrow{%
        \!\!%
        \raisebox{1mm}{\ensuremath{\scriptstyle{#1}}}%
        \!\!%
      }%
    }
  \begin{equation} \label{eqn:main_complex}
  \tensor{\bbZ/d\bbZ}{\cR}{\freemod{\cR}{\bfP_3}} \squeezexrightarrow{\bbZ/d\bbZ\otimes\partial_2}
  \tensor{\bbZ/d\bbZ}{\cR}{\freemod{\cR}{\bfP_2}} \squeezexrightarrow{\bbZ/d\bbZ\otimes\partial_1}
  \tensor{\bbZ/d\bbZ}{\cR}{\freemod{\cR}{\bfP_1}} \squeezexrightarrow{\bbZ/d\bbZ\otimes\partial_0}
  \tensor{\bbZ/d\bbZ}{\cR}{\freemod{\cR}{\bfP_0}}
\end{equation}
  where $\bbZ/d\bbZ$ above is the $\cR$-module defined by $\bbZ/d\bbZ(i) = \bbZ/d\bbZ$ (the abelian group of integers) for each object $i$, and the scalar multiplication is given by $(C,t)\cdot k = k$.
  \item The homology groups can be defined by
  \[
  H_i(\Sigma,R) = \ker(\bbZ/d\bbZ\otimes\partial_{i-1})/\im(\bbZ/d\bbZ\otimes\partial_i).
  \]
  It is shown that the homology groups of TRS depend only on the Tietze equivalence class of $(\Sigma,R)$.
  Thus, we have the following:
  \[
  \rewrst{R_1}\,=\,\rewrst{R_2}\implies H_i(\Sigma,R_1) \cong H_i(\Sigma,R_2).
  \]
\end{enumerate}
For the step 1, we define the relations of $\overline{\imonring{\bbK}}^{(\Sigma,R)}$.
We identify elements in $\imonring{\bbK}$ as follows.
(a) For two $m$-tuples $t=\langle t_1,\dotsc,t_m\rangle, s=\langle s_1,\dotsc,s_m\rangle$ of terms, we identify $t$ and $s$ if $t \xleftrightarrow{*}_R s$.
(b) Similarly, for two single-hole contexts $C,D$, we identify $C$ and $D$ if $C\xleftrightarrow{*}_R D$.
For the last identification, we introduce operator $\kappa_i$ which takes a term $t$ and returns the formal sum of single-hole contexts $C_1+\dotsb+C_m$ where $C_j$ $(j=1,\dotsc,m)$ is obtained by replacing the $j$-th occurrence of $x_i$ with $\square$ in $t$, and $m$ is the number of the occurrences of $x_i$ in $t$.
For example, we have
\begin{align*}
  \kappa_1(f(g(x_1,x_2),x_1)) &= f(g(\square,x_2),x_1) + f(g(x_1,x_2),\square),\\
  \kappa_2(f(g(x_1,x_2),x_1)) &= f(g(x_1,\square),x_1),\\
  \kappa_2(h(x_1)) &= 0.
\end{align*}
The definition of $\kappa_i$ can be stated inductively as follows:
\begin{align*}
  \kappa_i(x_i) &= \square,\ \kappa_i(x_j) = 0 \quad (j\neq i),\\
  \kappa_i(f(t_1,\dotsc,t_n)) &= \sum_{k=1}^n f(t_1,\dotsc,t_{k-1},\kappa_i(t_k),t_{k+1},\dotsc,t_n).
\end{align*}
Then, (c) we identify formal sums of bicontexts $(C_1,t)+\dots+(C_k,t)$ and $(D_1,t)+\dots+(D_l,t)$ if $\kappa_i(u) = C_1 + \dots + C_k$, $\kappa_i(v) = D_1 + \dots + D_l$ for some positive integer $i$ and terms $u,v$ such that $u \xleftrightarrow{*}_R v$.
$\overline{\imonring{\bbK}}^{(\Sigma,R)}$ is defined as the quotient of $\imonring{\bbK}$ by the equivalence class generated by the identifications (a), (b), and (c).

We omit the definitions of the $\cR$-linear maps $\epsilon,\partial_i$ ($i = 0,1,2$) in the step 2, but we describe the group homomorphisms $\bbZ/d\bbZ\otimes\partial_i : \tensor{\bbZ/d\bbZ}{\cR}{\freemod{\cR}{\bfP_{i+1}}}\rightarrow \tensor{\bbZ/d\bbZ}{\cR}{\freemod{\cR}{\bfP_i}}$.
Let $\tilde\partial_i$ denote $\bbZ/d\bbZ\otimes\partial_i$ for simplicity.
For the step 2, we define the $\cR$-linear maps $\epsilon,\partial_i$ ($i = 0,1,2$).
For $f^{(n)}\in\Sigma$, the homomorphism $\tilde\partial_0 : \tensor{\bbZ/d\bbZ}{\cR}{\freemod{\cR}{\bfP_{1}}}\rightarrow \tensor{\bbZ/d\bbZ}{\cR}{\freemod{\cR}{\bfP_0}}$
is given by
\[
\tilde\partial_0(\underline f) = (n-1)\underline{1}.
\]
For a term $t$, we define $\varphi(t)$ as the linear combinaton of symbols $\sum_{f\in\Sigma}n_f\underline{f}$ where $n_f$ is the number of occurrences of $f$ in $t$.
Using this, for $l\rrule r\in R$, the homomorphism $\tilde\partial_1 :
\tensor{\bbZ/d\bbZ}{\cR}{\freemod{\cR}{\bfP_{2}}}\rightarrow \tensor{\bbZ/d\bbZ}{\cR}{\freemod{\cR}{\bfP_1}}$
is given by
\[
\tilde\partial_1(\underline{l\rrule r}) = \varphi(r) - \varphi(l).
\]
For a critical pair $((\square,l\rightarrow r,s),(C,u\rightarrow v,t))$, let $(D_i,l_i\rightarrow r_i,s_i)$, $(C_j,u_j\rightarrow v_j,t_j)$ ($i=1,\dots,k,j=1,\dots,l$) be rewriting steps such that $r\circ s = D_1[l_1\circ s_1], D_1[r_1\circ s_1]=D_2[l_2\circ s_2],\dots,D_{k-1}[r_{k-1}\circ s_{k-1}] = D_k[l_k\circ s_k]$, $C[v\circ t] = C_1[u_1\circ t_1],C_1[v_1\circ t_1]=C_2[u_2\circ t_2],\dots,C_{l-1}[v_{l-1}\circ t_{l-1}]=C_{l}[u_l\circ t_l]$, $D_k[r_k\circ s_k] = C_l[v_l\circ t_l]$.
Then the map $\tilde\partial_2((\square,l\rightarrow r,s),(C,u\rightarrow v,t))$ is defined by
\[
\underline{u\rrule v} - \underline{l\rrule v} - \sum_{i=1}^k\underline{u_i\rrule v_i} - \sum_{j=1}^l\underline{l_j\rrule r_j}.
\]

Malbos-Mimram's lower bound for the number of rewrite rules is given by $s(H_2(\Sigma,R))$.
(Recall that $s(G)$ denotes the minimum number of generators of an abelian group $G$.)
More precisely, $\#\Sigma' \ge s(H_1(\Sigma,R))$ and $\#R' \ge s(H_2(\Sigma,R))$ hold for any TRS $(\Sigma',R')$ that is Tietze equivalent to $(\Sigma,R)$.
These inequalities are shown in a similar way to the proof of Theorem \ref{thm:srs_ineq}.

\section{Proof of Main Theorem}
\label{section:mainthm}
Let $(\Sigma,R)$ be a complete TRS.
We first simplify the tensor product $\tensor{\bbZ/d\bbZ}{\cR}{\freemod{\cR}{\bfP_i}}$ in (\ref{eqn:main_complex}).
\begin{lem}
  Let $d = \deg(R)$ and $P$ be a family of sets $P_0,P_1,\dots$.
  Then, we have
  $\tensor{\bbZ/d\bbZ}{\cR}{\freemod{\cR}{P}} \cong \freemod{(\bbZ/d\bbZ)}{\biguplus_iP_i}$.
  Especially, if $d=0$, $\tensor{\bbZ}{\cR}{\freemod{\cR}{P}} \cong \freemod{\bbZ}{\biguplus_iP_i}$.
\end{lem}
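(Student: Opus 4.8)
The plan is to mimic the proof of the analogous SRS statement (the unnamed lemma preceding Theorem~\ref{thm:srs_ineq}), exhibiting an explicit isomorphism together with its inverse rather than invoking abstract coend machinery. Recall that every component group of $\freemod{\cR}{P}$ is spanned by generators of the form $C\underline p t$ with $(C,t)$ a bicontext and $p\in P_j$, and that the right $\cR$-module $\bbZ/d\bbZ$ has each component equal to $\bbZ/d\bbZ$ with every bicontext acting as the identity, $(C,t)\cdot k = k$. First I would define the forward map
\[
\Phi:\tensor{\bbZ/d\bbZ}{\cR}{\freemod{\cR}{P}}\to \freemod{(\bbZ/d\bbZ)}{\biguplus_i P_i},\qquad \Phi(k\otimes C\underline p t)=k\,\underline p,
\]
on generators $k\otimes C\underline p t$, together with the candidate inverse $\Psi(\underline p)=1\otimes 1_j\underline p$ for $p\in P_j$, extended $\bbZ/d\bbZ$-linearly, where $1_j$ is the identity of $\cR(j,j)$.

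The technical core is to check that $\Phi$ is well defined, i.e.\ that it respects the defining relations of the tensor product over the ringoid. Additivity in each argument is immediate, so the only relation requiring attention is the balancing relation $(k\cdot a)\otimes y = k\otimes(a\cdot y)$ that carries a scalar $a\in\cR$ across the tensor. Writing $a=\sum_\alpha\lambda_\alpha (D_\alpha,s_\alpha)$, the triviality of the right action gives $k\cdot a=(\sum_\alpha\lambda_\alpha)\,k$, while $a\cdot(C\underline p t)=\sum_\alpha\lambda_\alpha\,((D_\alpha,s_\alpha)\circ(C,t))\underline p$; applying $\Phi$ to either side yields $(\sum_\alpha\lambda_\alpha)\,k\,\underline p$, so the two agree. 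I expect this to be the main obstacle, because it silently relies on the augmentation $\sum_\alpha\lambda_\alpha$ being a well-defined invariant of $a$ as an element of the \emph{quotient} ringoid $\cR=\overline{\imonring{\bbK}}^{(\Sigma,R)}$: the identification (c) can alter the number of summands by a multiple of $d=\deg(R)$, and it is exactly this fact that makes $\bbZ/d\bbZ$ a well-defined $\cR$-module in the first place. I would therefore take the well-definedness of the $\cR$-module $\bbZ/d\bbZ$ (from the overview of Section~\ref{section:homology}) as given, and simply observe that $\Phi$ inherits consistency from it.

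Finally I would check that $\Phi$ and $\Psi$ are mutually inverse. One direction is immediate, $\Phi(\Psi(\underline p))=\Phi(1\otimes 1_j\underline p)=\underline p$. For the other, using $(C,t)\cdot(1_j\underline p)=C\underline p t$, the balancing relation, and $k\cdot(C,t)=k$,
\[
k\otimes C\underline p t=k\otimes (C,t)\cdot(1_j\underline p)=(k\cdot(C,t))\otimes 1_j\underline p=k\otimes 1_j\underline p=\Psi(\Phi(k\otimes C\underline p t)),
\]
so $\Psi\circ\Phi=\Id$. Since each generator $p\in\biguplus_i P_i$ of the free module contributes exactly one copy of $\bbZ/d\bbZ$, the target is indeed $\freemod{(\bbZ/d\bbZ)}{\biguplus_i P_i}$, establishing the isomorphism. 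The case $d=0$ is the same computation verbatim with $\bbZ/0\bbZ=\bbZ$, giving $\tensor{\bbZ}{\cR}{\freemod{\cR}{P}}\cong\freemod{\bbZ}{\biguplus_i P_i}$.
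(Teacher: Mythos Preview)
Your proof is correct and follows essentially the same approach as the paper: both exhibit the explicit group homomorphism $k\otimes C\underline{p}t\mapsto k\,\underline{p}$ and argue it is an isomorphism. Your version is in fact more detailed than the paper's, which simply asserts that each component map $f_n$ is an isomorphism, whereas you verify well-definedness with respect to the balancing relation and supply the explicit inverse $\Psi$.
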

\begin{proof}
  We define a group homomorphism $f :\tensor{\bbZ/d\bbZ}{\cR}{\freemod{\cR}{P}} \rightarrow \freemod{(\bbZ/d\bbZ)}{\biguplus_iP_i}$ by
  $f((w_n)_{n\ge 0}) = \sum_{n\ge0}f_n(w_n)$ where $f_n : \tensor{\bbZ/d\bbZ}{\cR(n,n)}{\freemod{\cR}{P}(n)} \rightarrow \freemod{(\bbZ/d\bbZ)}{P_n}$ is defined by
  $f_n([k]\otimes C\underline{a}t) = [k]\underline{a}$ for $a\in P_n$.
  Since each $f_n$ is an isomorphism, $f$ is also an isomorphism.
\end{proof}
As special cases of this lemma, we have $\tensor{\bbZ/d\bbZ}{\cR}{\freemod{\cR}{\bfP_0}}\cong \freemod{(\bbZ/d\bbZ)}{\Sigma}$,
$\tensor{\bbZ/d\bbZ}{\cR}{\freemod{\cR}{\bfP_1}} \cong \freemod{(\bbZ/d\bbZ)}{R}$,
and
$\tensor{\bbZ/d\bbZ}{\cR}{\freemod{\cR}{\bfP_2}} \cong \freemod{(\bbZ/d\bbZ)}{\CR(R)}$.
Additionally, we can see each group homomorphism $\tilde\partial_i$ ($i=0,1,2$) is a $\bbZ/d\bbZ$-linear map.

To prove Theorem \ref{thm:main_explicit}, we show the following lemma.
\begin{lem} \label{lemma:smith_hom}
  Let $d = \deg(R)$.
  If $d=0$ or $d$ is prime,
  $\#R - e(R) = s(H_2(\Sigma,R)) + s(\im\tilde\partial_1)$.
(Recall that $s(G)$ is the minimum number of generators of a group $G$.)
\end{lem}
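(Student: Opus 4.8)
The plan is to deduce the identity by reading off both of its sides from the single chain complex~(\ref{eqn:main_complex}), using the Smith normal form of $D(R)$ for the left-hand side and a splitting argument for the right-hand side. Throughout, write $\fR=\bbZ$ when $d=0$ and $\fR=\bbZ/d\bbZ$ when $d$ is prime. By the previous lemma the relevant tensor products become free $\fR$-modules; set $B=\tensor{\bbZ/d\bbZ}{\cR}{\freemod{\cR}{\bfP_2}}\cong\freemod{\fR}{R}$, which is free of rank $\#R$, and note that $\im\tilde\partial_1$ lies in $\tensor{\bbZ/d\bbZ}{\cR}{\freemod{\cR}{\bfP_1}}\cong\freemod{\fR}{\Sigma}$. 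Up to an overall sign, which affects neither its image nor its Smith normal form, $D(R)$ is a matrix representation of $\tilde\partial_2\colon\freemod{\fR}{\CR(R)}\to B$ in the bases $\CR(R)$ and $R$, so that $\im\tilde\partial_2\subseteq B$, and hence the quotient $B/\im\tilde\partial_2$, is determined by the Smith normal form of $D(R)$.

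First I would produce the short exact sequence
\[
0 \to H_2(\Sigma,R) \to B/\im\tilde\partial_2 \to \im\tilde\partial_1 \to 0 .
\]
Since (\ref{eqn:main_complex}) is a chain complex, $\im\tilde\partial_2\subseteq\ker\tilde\partial_1\subseteq B$, so $H_2(\Sigma,R)=\ker\tilde\partial_1/\im\tilde\partial_2$ embeds naturally in $B/\im\tilde\partial_2$. The third isomorphism theorem gives $(B/\im\tilde\partial_2)/H_2(\Sigma,R)\cong B/\ker\tilde\partial_1$, and the first isomorphism theorem identifies $B/\ker\tilde\partial_1$ with $\im\tilde\partial_1$; together these yield the displayed sequence. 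All three modules are finitely generated, being subquotients of the free module $B$ of rank $\#R$.

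Next I would split the sequence. The term $\im\tilde\partial_1$ is a submodule of the free module $\freemod{\fR}{\Sigma}$, hence is itself free by Theorem~\ref{thm:subfree}; this is the only place the hypothesis that $d$ is $0$ or prime enters. Proposition~\ref{thm:split} then gives $B/\im\tilde\partial_2\cong H_2(\Sigma,R)\times\im\tilde\partial_1$, and because the second factor is free the minimal number of generators is additive:
\[
s\bigl(B/\im\tilde\partial_2\bigr)=s\bigl(H_2(\Sigma,R)\bigr)+s\bigl(\im\tilde\partial_1\bigr).
\]
For $\fR=\bbZ/d\bbZ$ this is just additivity of dimension over the field $\fR$; for $\fR=\bbZ$ it follows from the invariant factor decomposition, since adjoining a free summand of rank $k$ raises the number of invariant factors by exactly $k$.

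Finally I would compute the left-hand side $s(B/\im\tilde\partial_2)=\#R-e(R)$ from the Smith normal form of $D(R)$. Adapting a basis of $B$ to the elementary divisors $e_1\mid\dots\mid e_r$ gives $B/\im\tilde\partial_2\cong\bigoplus_{i=1}^{r}\fR/(e_i)\;\oplus\;\fR^{\,\#R-r}$; the $e(R)$ invertible $e_i$ contribute trivial summands while each remaining torsion summand contributes one invariant factor, so counting invariant factors yields $(r-e(R))+(\#R-r)=\#R-e(R)$ when $d=0$, and over the field $\fR$ every nonzero $e_i$ is a unit, forcing $r=e(R)$ and again $\#R-e(R)$. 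Equating the two expressions for $s(B/\im\tilde\partial_2)$ proves the lemma. I expect the only genuinely delicate point to be the additivity of $s$ across the product: it can fail when the summand carries torsion (for instance $s(\bbZ/2\bbZ\times\bbZ/3\bbZ)=1\neq 2$), so the freeness of $\im\tilde\partial_1$ supplied by Theorem~\ref{thm:subfree} is essential rather than a convenience.
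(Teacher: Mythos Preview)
Your proof is correct and in fact cleaner than the paper's own argument for this lemma. The paper treats the two cases separately: for $d$ prime it applies the rank--nullity theorem over the field $\bbZ/d\bbZ$ to get $\dim H_2(\Sigma,R)=\#R-\dim(\im\tilde\partial_1)-e(R)$ directly, while for $d=0$ it computes $H_2(\Sigma,R)$ explicitly as $\bbZ^{\#R-r-k}\times\bbZ/e_1\bbZ\times\dots\times\bbZ/e_r\bbZ$ and then pins down $s$ by bounding it above by counting obvious generators and below by passing to $G/pG$ for a prime $p\mid e_{e(R)+1}$.

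Your route sidesteps this case split by computing $s(B/\im\tilde\partial_2)$ in two ways---once as $\#R-e(R)$ from the Smith normal form, once as $s(H_2)+s(\im\tilde\partial_1)$ via the split short exact sequence---and equating them. The short exact sequence and the splitting via Proposition~\ref{thm:split} are precisely what the paper uses in its proof of the \emph{next} result, Theorem~\ref{thm:main}; you have effectively merged the two arguments, which makes the lemma's proof uniform in $d$. What the paper's version buys, and yours does not produce, is an explicit structural description of $H_2(\Sigma,R)$ itself in the $d=0$ case; but the lemma's statement does not ask for that. Your caution about additivity of $s$ is well placed and your justification (free summand increases the number of invariant factors by its rank) is the right one.
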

\begin{proof}
  By definition, $D(R)$ defined in Section \ref{section:algorithm} is a matrix representation of $\tilde\partial_2$.
  Suppose $d$ is prime.
  In this case, $s(H_2(\Sigma,R))$ is equal to the dimension of $H_2(\Sigma,R)$ as a $\bbZ/d\bbZ$-vector space.
  By the rank-nullity theorem, we have
  \begin{align*}
    \dim(H_2(\Sigma,R))&=\dim(\ker\tilde\partial_1)-\dim(\im\tilde\partial_2)\\
    &=\dim(\tensor{\bbZ/d\bbZ}{\cR}{\freemod{\cR}{\bfP_1}})-\dim(\im\tilde\partial_1)-\dim(\im\tilde\partial_2)\\
    &=\dim(\freemod{(\bbZ/d\bbZ)}{R}) - \dim(\im\tilde\partial_1) - \rank(D(R))\\
    &=\#R - \dim(\im\tilde\partial_1) - e(R).
  \end{align*}

  Suppose $d=0$.
  We show $H_2(\Sigma,R)\cong \bbZ^{\#R-r-k}\times\bbZ/e_1\bbZ\times\dotsb\times\bbZ/e_r\bbZ$ where $r = \rank(D(R))$, $k=s(\im\tilde\partial_1)$, and $e_1,\dotsc,e_r$ are the elementary divisors of $D(R)$.
  Let
  \[
  \overline\partial_1 : \tensor{\bbZ}{\cR}{\freemod{\cR}{\bfP_1}}/\im\tilde\partial_2 \rightarrow \tensor{\bbZ}{\cR}{\freemod{\cR}{\bfP_0}}
  \]
  be the group homomorphism defined by $[x]\mapsto \tilde\partial_1(x)$.
  $\overline\partial_1$ is well-defined since $\im\tilde\partial_2\subset\ker\tilde\partial_1$, and
  $\ker\overline\partial_1$ is isomorphic to  $\ker\tilde\partial_1/\im\tilde\partial_2=H_2(\Sigma,R)$.
  By taking the basis $v_1,\dotsc,v_{\#R}$ of $\tensor{\bbZ}{\cR}{\freemod{\cR}{\bfP_1}}\cong \freemod{\bbZ}{R}$ such that $D(R)$ is the matrix representation of $\tilde\partial_2$ under the basis $v_1,\dotsc,v_{\#R}$ and some basis of $\tensor{\bbZ}{\cR}{\freemod{\cR}{\bfP_2}}$,
  we can see $\tensor{\bbZ}{\cR}{\freemod{\cR}{\bfP_1}}/\im\tilde\partial_2\cong \bbZ^{\#R-r}\times \bbZ/e_1\bbZ\times\dotsb\times\bbZ/e_k\bbZ$.
  Suppose $\overline\partial_1(e_i[x]) = 0$ for some $x$ and $i = 1,\dotsc,r$.
  Since $\overline\partial_1$ is a homomorphism, $\overline\partial_1(e_i[x]) = e_i\overline\partial_1([x]) \in \tensor{\bbZ}{\cR}{\freemod{\cR}{\bfP_0}} \cong \freemod{\bbZ}{\Sigma}$ holds.
  Since $\freemod{\bbZ}{\Sigma}$ is free, we have $[x] = 0$.
  Therefore, $\ker\overline\partial_1$ is included in the subset of $\tensor{\bbZ}{\cR}{\freemod{\cR}{\bfP_1}}/\im\tilde\partial_2$ isomorphic to $\bbZ^{\#R-r}\times \{0\}\times\dotsb\times\{0\}$.
  Thus, $\ker\overline\partial_1 \cong \bbZ^{\#R-r-k}\times\bbZ/e_1\bbZ\times\dotsb\times\bbZ/e_r\bbZ$.

  Since $\bbZ/e\bbZ \cong 0$ if $e$ is invertible, $\bbZ^{\#R-r-k}\times\bbZ/e_1\bbZ\times\dotsb\times\bbZ/e_k\bbZ \cong \bbZ^{\#R-r-k}\times\bbZ/e_{e(R)+1}\bbZ\times\dotsb\bbZ/e_r\bbZ =: G$.
  The group $G$ is generated by $(\underbrace{1,0,\dotsc,0}_{\#R-r-k},\underbrace{[0],\dotsc,[0]}_{r-e(R)})$, $\dotsc$, $(0,\dotsc,0,1,[0],\dotsc,[0])$, $\dotsc$, $(0,\dotsc,0,[1],[0],\dotsc,[0])$, $\dotsc$, $(0,\dotsc,0,[0],\dotsc,[0],[1])$,
  so we have $s(G) \le \#R - r - k + r - e(R) = \#R - k - e(R)$.
  Let $p$ be a prime number which divides $e_{e(R)+1}$.
  We can see $G/pG \cong (\bbZ/p\bbZ)^{\#R-k-e(R)}$.
  It is not hard to see $s(G)\ge s(G/pG)$, and
  since $G/pG$ is a $\bbZ/p\bbZ$-vector space, $s(G/pG)=\dim(G/pG) = \#R-k-e(R)$.
  Thus, $s(H_2(\Sigma,R)) = s(G) =  \#R - s(\im\tilde\partial_1) - e(R)$.
\end{proof}
By Lemma \ref{lemma:smith_hom}, Theorem \ref{thm:main_explicit} is implied by the following theorem:
\begin{thm}\label{thm:main}
  Let $(\Sigma,R)$ be a TRS and $d = \deg(R)$. If $d=0$ or $d$ is prime,
    \begin{equation}
    \label{eqn:main}
    \#R \ge  s(H_2(\Sigma,R))+ s(\im \tilde\partial_1).
  \end{equation}
\end{thm}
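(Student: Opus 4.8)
The plan is to read the inequality off the three-term tail of the chain complex (\ref{eqn:main_complex}) after simplifying the tensor products. By the tensor-simplification lemma, the relevant portion of (\ref{eqn:main_complex}) is isomorphic to
\[
\freemod{(\bbZ/d\bbZ)}{\CR(R)} \xrightarrow{\tilde\partial_2} \freemod{(\bbZ/d\bbZ)}{R} \xrightarrow{\tilde\partial_1} \freemod{(\bbZ/d\bbZ)}{\Sigma},
\]
with $H_2(\Sigma,R) = \ker\tilde\partial_1/\im\tilde\partial_2$. Writing $\fR$ for $\bbZ$ (if $d=0$) or $\bbZ/d\bbZ$ (if $d$ is prime), all three terms are \emph{free} $\fR$-modules, and $\freemod{\fR}{R}$ has rank $\#R$, so $s(\freemod{\fR}{R}) = \#R$.

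First I would package $\tilde\partial_1$ into the short exact sequence
\[
0 \to \ker\tilde\partial_1 \xrightarrow{\iota} \freemod{\fR}{R} \xrightarrow{\pi} \im\tilde\partial_1 \to 0
\]
using Proposition \ref{thm:exactbasic}(3) together with the first isomorphism theorem, which identifies the cokernel of $\iota$ with $\im\tilde\partial_1$. The decisive point is that $\im\tilde\partial_1$ is a submodule of the free module $\freemod{\fR}{\Sigma}$; since $\fR$ is $\bbZ$ or $\bbZ/p\bbZ$, Theorem \ref{thm:subfree} makes $\im\tilde\partial_1$ free. This is exactly where the hypothesis ``$d=0$ or prime'' is used: it guarantees $\fR$ is a principal ideal domain, so submodules of free modules remain free.

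Because $\im\tilde\partial_1$ is free, Proposition \ref{thm:split} splits the sequence, giving $\freemod{\fR}{R} \cong \ker\tilde\partial_1 \times \im\tilde\partial_1$. Both factors are free (the first again by Theorem \ref{thm:subfree}, as a submodule of the free module $\freemod{\fR}{R}$), and for free $\fR$-modules the minimal number of generators equals the rank, which is additive over direct products; hence
\[
\#R = s(\freemod{\fR}{R}) = s(\ker\tilde\partial_1) + s(\im\tilde\partial_1).
\]
Finally I would invoke that a quotient never needs more generators than the module itself: since $H_2(\Sigma,R) = \ker\tilde\partial_1/\im\tilde\partial_2$, any generating set of $\ker\tilde\partial_1$ projects onto a generating set of $H_2(\Sigma,R)$, so $s(H_2(\Sigma,R)) \le s(\ker\tilde\partial_1)$. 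Substituting yields $\#R = s(\ker\tilde\partial_1) + s(\im\tilde\partial_1) \ge s(H_2(\Sigma,R)) + s(\im\tilde\partial_1)$, which is (\ref{eqn:main}).

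The step I expect to carry the real weight is the freeness of $\im\tilde\partial_1$ and the resulting splitting. Without it the naive inequality $s(A \times B) \ge s(A) + s(B)$ simply fails for abelian groups (for instance $\bbZ/2\bbZ \times \bbZ/3\bbZ \cong \bbZ/6\bbZ$ has $s = 1$), so freeness is precisely what forces the ranks to add exactly and lets the quotient bound go through; everything else is bookkeeping on the identifications supplied by the tensor-simplification lemma.
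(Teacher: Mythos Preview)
Your proof is correct and follows the same strategy as the paper: both hinge on Theorem \ref{thm:subfree} to make $\im\tilde\partial_1$ free and then invoke Proposition \ref{thm:split} to split a short exact sequence. The only difference is which sequence is split. The paper uses the first and third isomorphism theorems to obtain $0 \to H_2(\Sigma,R) \to \freemod{\fR}{R}/\im\tilde\partial_2 \to \im\tilde\partial_1 \to 0$, splits that, and bounds $s$ of the middle term by $\#R$; you split $0 \to \ker\tilde\partial_1 \to \freemod{\fR}{R} \to \im\tilde\partial_1 \to 0$ directly and only afterward pass to the quotient $H_2$. Your route has the small advantage that the additivity $s(\ker\tilde\partial_1 \times \im\tilde\partial_1) = s(\ker\tilde\partial_1) + s(\im\tilde\partial_1)$ is just rank additivity for free modules, whereas the paper's asserted equality $s(H_2(\Sigma,R) \times \im\tilde\partial_1) = s(H_2(\Sigma,R)) + s(\im\tilde\partial_1)$ requires a word of justification since $H_2$ need not be free (it is still true, for the reason you identify in your closing remark about $\bbZ/2\bbZ \times \bbZ/3\bbZ$: one needs the free factor to force the ranks to add).
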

\begin{proof}
  By the first isomorphism theorem, we have an isomorphism between $\bbZ/d\bbZ$-modules
  \[
  \im\tilde\partial_1 \simeq \tensor{\bbZ/d\bbZ}{\cR}{\freemod{\cR}{\bbP_2}}\quot \ker\tilde\partial_1
  \]
  and by the third isomorphism theorem, the right hand side is isomorphic to
  \begin{align*}
  \tensor{\bbZ/d\bbZ}{\cR}{\freemod{\cR}{\bbP_2}}\quot \ker\tilde\partial_1
  &\simeq \left(\tensor{\bbZ/d\bbZ}{\cR}{\freemod{\cR}{\bbP_2}}\quot\im\tilde\partial_2\right)
  \quot \left(\ker\tilde\partial_1\quot\im\tilde\partial_2\right)\\
  &\simeq
  \left(\tensor{\bbZ/d\bbZ}{\cR}{\freemod{\cR}{\bbP_2}}\quot\im\tilde\partial_2\right)
  \quot H_2(\Sigma,R).
\end{align*}
  Thus, we obtain the following exact sequence by Proposition \ref{thm:exactbasic}:
  \[
  0 \rightarrow H_2(\Sigma,R)
   \rightarrow
   \tensor{\bbZ/d\bbZ}{\cR}{\freemod{\cR}{\bbP_2}}\quot\im\tilde\partial_2
   \rightarrow
   \im\tilde\partial_1
   \rightarrow 0.
  \]
  By Theorem \ref{thm:subfree}, since $\im\tilde\partial_1\subset \tensor{\bbZ/d\bbZ}{\cR}{\freemod{\cR}{\bbP_1}}\cong \freemod{(\bbZ/d\bbZ)}{R}$ and $\freemod{(\bbZ/d\bbZ)}{R}$ is a free $\bbZ/d\bbZ$-module, $\im\tilde\partial_1$ is also free and by Proposition \ref{thm:split},
  we have $\tensor{\bbZ/d\bbZ}{\cR}{\freemod{\cR}{\bbP_2}}\quot\im\tilde\partial_2 \cong H_2(\Sigma,R)\times \im\tilde\partial_1$.
  Therefore, $s(\tensor{\bbZ/d\bbZ}{\cR}{\freemod{\cR}{\bbP_2}}\quot\im\tilde\partial_2) = s(H_2(\Sigma,R)) + s(\im\tilde\partial_1)$.
  Since $\tensor{\bbZ/d\bbZ}{\cR}{\freemod{\cR}{\bbP_2}}\quot\im\tilde\partial_2$ is generated by $[l_1\rrule r_1],\dots,[l_k\rrule r_k]$ if $R = \{l_1\rrule r_1,\dots,l_k\rrule r_k\}$, we obtain
  \begin{equation*}
  k = \#R\ge s(\tensor{\bbZ/d\bbZ}{\cR}{\freemod{\cR}{\bbP_2}}\quot\im\tilde\partial_2) = s(H_2(\Sigma,R))+s(\im\tilde\partial_1).
  \end{equation*}
  Thus, we get (\ref{eqn:main}).
\end{proof}
Now, we prove our main theorem, Theorem \ref{thm:main_explicit}.
\begin{proof}[Proof of Theorem \ref{thm:main_explicit}]
As we stated, $H_2(\Sigma,R)$ depends only on the Tietze equivalence class of $(\Sigma,R)$ and particularly, $H_2(\Sigma,R')$ is isomorphic to $H_2(\Sigma,R)$ if $R'$ is equivalent to $R$ (in the sense ${\xleftrightarrow{*}_R}={\xleftrightarrow{*}_{R'}}$).
Let us show $s(\im\tilde\partial_1)$ depends only on the  equivalence class of $R$.
For a left $\fR$-module $M$, $\rank(M)$ denotes the cardinality of a minimal linearly independent generating set of $M$, that is, a minimal generating set $S$ of $G$ such that any element $s_1,\dotsc,s_k\in\Gamma$,
and $r_1s_1 + \dotsb + r_ks_k = 0 \implies r_1=\dotsb=r_k=0$ for any $r_1,\dotsc,r_k\in\fR$, $s_1,\dotsc,s_k \in S$.
It can be shown that $\rank(M) = s(M)$ if $M$ is free.
Especially, $s(\im\tilde\partial_1) = \rank(\im\tilde\partial_1)$ since $\im\tilde\partial_1\subset\freemod{\bbZ}{R}$ if $\deg(R)=0$.
Also, $\rank(\im\tilde\partial_1) = \rank(\ker\tilde\partial_0) - \rank(\ker\tilde\partial_0/\im\tilde\partial_1)$ is obtained by a general theorem \cite[Ch 10, Lemma 10.1]{r10}.
By definition, $\tilde\partial_0$ does not depend on $R$.
Since $\ker\tilde\partial_0/\im\tilde\partial_1 = H_1(\Sigma,R)$ depends only on the Tietze quivalence class of $R$, two sets of rules $R,R'$ with ${\xleftrightarrow{*}_R}={\xleftrightarrow{*}_{R'}}$ give the same $\rank(\im\tilde\partial_1)$.

In conclusion, for any TRS $R'$ equivalent to $R$, we obtain
$\#R' \ge s(H_2(\Sigma,R))+ s(\im \tilde\partial_1) = \#R - e(R)$.
\end{proof}

\section{Prime Critical Pairs in a Homological Perspective}
\label{section:prime}
Let $(\Sigma,R)$ be a complete TRS.
It is known that in confluence tests (and then in Knuth-Bendix completion), it suffices to consider only prime critical pairs \cite{k88}.
(A critical pair $r_1\sigma \leftarrow l_1\sigma= C[l_2\sigma]\rightarrow C[r_2\sigma])$ is prime if no proper subterm of $l_2\sigma$ is reducible by $R$.)
We have defined the matrix $D(R)$ using the critical pairs of $R$, but
in fact, we can restrict the critical pairs to the prime ones and obtain the same $e(R)$.
In other words, we have the following theorem.
\begin{thm}
In the matrix $D(R)$, all columns corresponding to critical pairs that are not prime can be transformed into the zero vectors by elementary column operations.
\end{thm}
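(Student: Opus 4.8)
The plan is to lift the statement to the level of the ringoid $\cR=\overline{\imonring{\bbK}}^{(\Sigma,R)}$ and prove the stronger claim that for every non-prime critical pair $P$ the element $\partial_2(\underline P)$ lies in the $\cR$-submodule of $\freemod{\cR}{\bfP_2}$ generated by $\{\partial_2(\underline{P'})\mid P'\text{ prime}\}$. This suffices: applying $\tensor{\bbZ/d\bbZ}{\cR}{-}$ sends $\underline P$ to the $P$-th basis vector of $\freemod{(\bbZ/d\bbZ)}{\CR(R)}$ and $\partial_2$ to $\tilde\partial_2$, whose matrix is the $D(R)$ produced by the earlier lemma. Since the right $\cR$-action on the module $\bbZ/d\bbZ$ collapses every coefficient to its augmentation (its coefficient sum), an $\cR$-linear relation $\partial_2(\underline P)=\sum_i a_i\cdot\partial_2(\underline{P'_i})$ descends to a $\bbZ/d\bbZ$-linear relation between the corresponding columns of $D(R)$. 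Hence each non-prime column is a $\bbZ/d\bbZ$-combination of prime columns and can be cleared by column operations that add multiples of prime columns, which are themselves left unchanged; this leaves $e(R)$ (the number of invertible elementary divisors, i.e.\ $\rank(D(R))$ when $d$ is prime) unchanged.

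I would argue by well-founded induction on the inner redex of $P$ ordered by the strict subterm relation. Write the peak as $t=r_a\sigma\leftarrow w\rightarrow s=C[r_b\sigma]$ with $w=l_a\sigma=C[l_b\sigma]$, where rule $a$ acts at the root and rule $b$ at the hole of $C$. Non-primality gives a redex $\rho$, using some rule $c$, occurring strictly inside the inner redex $l_b\sigma$; let $w\xrightarrow{c}w''$ be the corresponding step. The key move is to route the confluence of the peak through $w''$ and to split the boundary accordingly,
\[
\partial_2(\underline P)=\big(\text{boundary of } s\leftarrow w\rightarrow w''\big)+\big(\text{boundary of } w''\leftarrow w\rightarrow t\big),
\]
each summand being the $\partial_2$-image of the indicated local peak, computed from a common reduction of its two tips to normal form.

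Each of the two local peaks is then classified by the position of the reduced redex. If that position lies under a variable of the relevant left-hand side, the peak is a \emph{variable overlap}: its two branches rejoin by reducing the duplicated or erased variable bindings, and the defining relations of $\cR$ (the $\kappa_i$-identifications, which is exactly where the multiplicities $\#_i l-\#_i r$, and hence $\deg(R)$, enter) force its $\cR$-boundary to vanish. Otherwise the position is a non-variable position of that left-hand side and the peak is a (possibly instantiated) \emph{critical overlap}, so its boundary equals $\theta\cdot\partial_2(\underline{P^\star})$ for a critical pair $P^\star$ and a bicontext $\theta$. Crucially the inner redex of $P^\star$ is the $c$-redex, which is a proper subterm of $l_b\sigma$, so $P^\star$ is strictly smaller than $P$ and the induction hypothesis applies. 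Summing the two contributions expresses $\partial_2(\underline P)$ as an $\cR$-combination of $\partial_2$-images of strictly smaller critical pairs, hence of prime ones.

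I expect the main obstacle to be the two coherence lemmas underlying the decomposition. First, that the $\cR$-boundary of a variable overlap is genuinely zero: this is the naturality built into $\overline{\imonring{\bbK}}^{(\Sigma,R)}$ and must be verified against the identifications (a)--(c) defining it, with particular care when a variable occurs with different multiplicities on the two sides of a rule, since there the naïve rule counts do \emph{not} cancel and only the $\cR$-level identities rescue the vanishing. Second, that the boundary of the original peak splits \emph{exactly}, and not merely modulo $\im\partial_3$, as the sum of the two sub-peak boundaries; this is a Squier-type coherence statement for the chosen reductions to normal form. The remaining difficulty is combinatorial: a position analysis showing that in every case the overlap freshly created with $l_a$ or $l_b$ is an instance of a bona fide critical pair whose inner redex is a proper subterm of $l_b\sigma$, so that the induction measure strictly decreases and the recursion terminates at prime critical pairs.
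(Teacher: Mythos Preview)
Your decomposition via a third redex, the four-way case split into critical versus variable overlaps for each of the two sub-peaks, and the induction on the inner redex is exactly the paper's argument. The genuine difference is the level at which you run it: you lift everything to the $\cR$-module $\freemod{\cR}{\bfP_2}$ and aim for an identity $\partial_2(\underline P)=\sum_i\theta_i\cdot\partial_2(\underline{P^\star_i})$ there, whereas the paper stays entirely at the matrix level. At the matrix level the ``boundary splitting'' you flag as an obstacle is the one-line identity $V_{i,j}=V_{i,k}-V_{j,k}$ coming from $(v_{i,m}-v_{j,m})=(v_{i,m}-v_{k,m})-(v_{j,m}-v_{k,m})$, and the variable-overlap vanishing is just the observation that the two normalising paths agree except for $\#_a l-\#_a r$ extra uses of rule $c$, which is $0$ in $\bbZ/d\bbZ$. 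So the two coherence lemmas you single out as the hard part are, after tensoring, trivial; by insisting on the untensored statement you turn a two-line computation into a Squier-style coherence theorem that you do not prove. For the theorem as stated this extra strength buys nothing.

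There is also a small slip in your induction measure. You order critical pairs by ``the inner redex under the strict subterm relation'', but the inner redex of $P^\star$ is $l_c\sigma^\star$ for the most general unifier $\sigma^\star$ of that overlap, which is not a subterm of $l_b\sigma$; it is only an \emph{instance-predecessor} of the proper subterm $l_c\sigma$. What does strictly decrease is the \emph{size} of the inner redex, since $|l_c\sigma^\star|\le|l_c\sigma|<|l_b\sigma|$. The paper's phrasing (``repeating this process for all non-prime pairs'') leaves the termination equally implicit, but the fix is the same.
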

\begin{proof}
For any terms $t,s$ and position $p \in \pos(t)$, we write $t[s]_p$ for the term obtained from $t$ by replacing the subterm at $p$ with $s$.
Suppose that $R=\{l_1\rrule r_1,\dots,l_n\rrule r_n\}$ and that there is a non-prime critical pair $(r_i\sigma,l_i\sigma[r_j\sigma]_p)$ for some position $p\in\pos(l_i)$.
Then, there is a rule $l_k \rrule r_k$ such that $l_k$ matches $(l_j\sigma)|_p$ for some position $p' \in \pos(l_j\sigma)$.
So, we have the following three paths.
\begin{equation}
\begin{tikzcd}
\label{diagram:critical_triple}
l_i\sigma \arrow[equal]{r} \arrow[d] & {l_i\sigma[l_j\sigma]_p} \arrow[equal]{r} \arrow[d] & {l_i\sigma[l_j\sigma[l_k\sigma]_{p'}]_p} \arrow[d]\\
r_i\sigma \arrow[d]                   & {l_i\sigma[r_j\sigma]_p} \arrow[d]                                & {l_i\sigma[l_j\sigma[r_k\sigma]_{p'}]_p} \arrow[d] \\
\vdots \arrow[rd]                     & \vdots \arrow[d]                                        & \vdots \arrow[ld]           \\
                                      & t                                                       &                             
\end{tikzcd}
\end{equation}
We show that the column for $(r_i\sigma,l_i\sigma[r_j\sigma]_p)$ in $D(R)$ can be transformed into the zero vector by elementary column operations.
Let $\pos_{\mathcal{F}}(s)$ be the set $\pos(s)\setminus\{\text{variable positions in $s$}\}$ for a term $s$.
Consider the following four cases: (1) $p' \in \pos_{\mathcal{F}}(l_j)$ and $pp' \in \pos_{\mathcal{F}}(l_i)$, (2) $p' \in \pos_{\mathcal{F}}(l_j)$ and $pp'\notin \pos_{\mathcal{F}}(l_i)$, (3) $p'\notin \pos_{\mathcal{F}}(l_j)$ and $pp'\in \pos_{\mathcal{F}}(l_i)$, (4) $p' \notin \pos_{\mathcal{F}}(l_j)$ and $pp'\notin \pos_{\mathcal{F}}(l_i)$.
\begin{enumerate}
    \item Case where $p' \in \pos_{\mathcal{F}}(l_j)$ and $pp'\in\pos_{\mathcal{F}}(l_i)$:
In this case, for some substitutions $\sigma',\sigma''$ more general than $\sigma$, the pairs $P_{j,k} = (r_j\sigma', l_j\sigma'[r_k\sigma']_{p'})$ and $P_{i,k}=(r_i\sigma'', l_i\sigma''[l_j\sigma''[r_k\sigma'']_{p'}]_p)$ are critical.
Let $P_{i,j}=(r_i\sigma,l_i\sigma[r_j\sigma]_p)$.
Also, we write $v_{i,m}, v_{j,m}, v_{k,m}$ for the numbers of $l_m\rrule r_m$ appears in $l_i\sigma\rrule r_i\sigma \rrule \dots \rrule t$, $l_i\sigma[l_j\sigma]_p \rrule l_i\sigma[r_j\sigma]_p \rrule \dots \rrule t$, $l_i\sigma[l_j\sigma[l_k\sigma]_{p'}]_p \rrule l_i\sigma[l_j\sigma[r_k\sigma]_{p'}]_p \rrule \dots \rrule t$, respectively.
Then, the column of the matrix $D(R)$ for the critical pair $P_{\alpha,\beta}$ ($\alpha,\beta\in\{i,j,k\}$) can be given by
$V_{\alpha,\beta} := (v_{\alpha,1}-v_{\beta,1}~v_{\alpha,2}-v_{\beta,2}~\dots~v_{\alpha,n}-v_{\beta,n})^T$.
So, we have $V_{i,j} = V_{i,k} - V_{j,k}$ and this means that the column for $P_{i,j}$ can be transformed into the zero vector by an elementary column operation.
    \item Case where $p' \in \pos_{\mathcal{F}}(l_j)$ and $pp'\notin\pos_{\mathcal{F}}(l_i)$:
In this case, $(r_j\sigma',l_j\sigma'[r_k\sigma']_{p'})$ is a critical pair for some $\sigma'$ and $l_k\rrule r_k$ can rewrite $x_a\sigma$ for some variable $x_a$ in $l_i$ into some term $s$.
Then, we have the two paths:
\[
  \begin{tikzcd}
{l_i\sigma[l_j\sigma[l_k\sigma]_{p'}]_p} \arrow[d, Rightarrow, no head] \arrow[r, "l_k \rrule r_k"]
\ar[start anchor=center, end anchor=center, yshift=1.2em, decorate,decoration={brace,amplitude=2mm}, no head]{rrr}[outer sep=2mm]{\#_a l_i \text{ times}}
& {l_i\sigma[l_j\sigma[r_k\sigma]_{p'}]_p} \arrow[r, "l_k \rrule r_k"] & \dots \arrow[r, "l_k \rrule r_k"] & l_i\sigma'' \arrow[d, "l_i \rrule r_i"] \\
l_i\sigma \arrow[r, "l_i \rrule r_i"]
& r_i\sigma \arrow[r, "l_k \rrule r_k"]
\ar[start anchor=center, end anchor=center, yshift=-1.2em, decorate,decoration={brace, mirror,amplitude=2mm}, no head]{rr}[outer sep=2mm,below]{\#_ar_i \text{ times}}
& \dots \arrow[r, "l_k \rrule r_k"] & r_i\sigma''          
\end{tikzcd}
\]
where $\sigma''$ is the substitution that is the same as $\sigma$ but $x_a\sigma'' = s$.
Then, $v_{i,m}-v_{k,m} = 0$ for any $m\neq k$ and $v_{i,k}-v_{k,k}=\#_al_i-\#_ar_i = 0 \mod \deg(R)$.
Therefore, we have $V_{i,j} = -V_{j,k}\mod \deg(R)$.
    \item Case where $p'\notin\pos_{\mathcal{F}}(l_j)$ and $pp'\in \pos_{\mathcal{F}}(l_i)$:
We can show $V_{i,j}=V_{i,k}$ in a way similar to (2).
    \item Case where $p'\notin\pos_{\mathcal{F}}(l_j)$ and $pp'\notin \pos_{\mathcal{F}}(l_i)$: Also in a similar way, we can see that $V_i,j$ is the zero vector.
\end{enumerate}
Thus, we can remove the column for $P_{i,j}$ in $D(R)$.
Repeating this proccess for all non-prime pairs, we obtain the desired result.
\end{proof}

For case (1) in the proof, we can actually remove the column for $P_{i,k}$ or $P_{j,k}$ instead of $P_{i,j}$ by symmetry.
%Moreover, the context $C'$ is not necessarily a hole, that is, $p_{i,j}$ can be prime and we can still remove one of $p_{i,j},p_{i,k},p_{j,k}$.
We shall call a triple of critical pairs like (\ref{diagram:critical_triple}) a \emph{critical triple}.
Recall that we mentioned $D(R)$ is a matrix presentation of $\tilde\partial_2=\bbZ/d\bbZ\otimes \partial_2 : \tensor{\bbZ/d\bbZ}{\cR}{\freemod{\cR}{\bfP_3}} \rightarrow \tensor{\bbZ/d\bbZ}{\cR}{\freemod{\cR}{\bfP_2}}$ in Section \ref{section:mainthm}.
If we define $\bfP_4$ to be a collection of critical triples and
$\tilde\partial_3 : \tensor{\bbZ/d\bbZ}{\cR}{\freemod{\cR}{\bfP_4}}\rightarrow\tensor{\bbZ/d\bbZ}{\cR}{\freemod{\cR}{\bfP_3}}$
to be $\tilde\partial_3(\underline{(P_{i,j},P_{i,k},P_{j,k})}) = \underline{P_{i,j}}-\underline{P_{i,k}}+\underline{P_{j,k}}$,
then we have $\tilde\partial_2\circ\tilde\partial_3(\underline{(P_{i,j},P_{i,k},P_{j,k})}) = \tilde\partial_2(\underline{P_{i,j}}-\underline{P_{i,k}}+\underline{P_{j,k}})=0$.
(This corresponds to $V_{i,k}=V_{i,k}-V_{j,k}$.)
Therefore $\ker\tilde\partial_2 \supset\im\tilde\partial_3$ holds, so we can extend our chain complex (\ref{eqn:main_complex}):
\begin{equation*}
\tensor{\bbZ/d\bbZ}{\cR}{\freemod{\cR}{\bfP_4}} \xrightarrow{\tilde\partial_3}
\tensor{\bbZ/d\bbZ}{\cR}{\freemod{\cR}{\bfP_3}} \xrightarrow{\tilde\partial_2}
\tensor{\bbZ/d\bbZ}{\cR}{\freemod{\cR}{\bfP_2}} \xrightarrow{\tilde\partial_1}
\tensor{\bbZ/d\bbZ}{\cR}{\freemod{\cR}{\bfP_1}} \xrightarrow{\tilde\partial_0}
\tensor{\bbZ/d\bbZ}{\cR}{\freemod{\cR}{\bfP_0}}.
\end{equation*}
Note that since we have not defined $\partial_3 : \tensor{\bbZ/d\bbZ}{\cR}{\freemod{\cR}{\bfP_3}} \to \tensor{\bbZ/d\bbZ}{\cR}{\freemod{\cR}{\bfP_2}}$, the third homology $H_3$ is meaningless unless we define $\partial_3$ extending resolution (\ref{eqn:main_resol}) and show $\tilde\partial_3 = \bbZ_d\otimes\partial_3$.
However, this suggests that the next term of our partial resolution could be generated by critical triples.

\section{Deficiency and Computability}
\label{section:impossible}
We consider the case where every symbol in $\Sigma$ is of arity 1.
Notice that any TRS $(\Sigma,R)$ can be seen as an SRS and $\deg(R)=0$ in this case.
We have $\rank(\ker \tilde\partial_0) = \#\Sigma$ since $\tilde\partial_0(\underline{f}) = 0$ for any $f \in \Sigma$.
Therefore, (\ref{eqn:main}) can be rewritten to
\begin{equation}
\label{eqn:morse}
  \#R - \#\Sigma \ge s(H_2(\Sigma,R))-\rank(H_1(\Sigma,R)).
\end{equation}
So, for SRSs, we have a lower bound of the difference between the number of rewrite rules and the number of symbols.
For groups, in fact, this inequality is proved in terms of group homology \cite{e61} without using the notion of a rewriting system.
In group theory, a group presentation $(\Sigma,R)$ is called \emph{efficient} if the equality of (\ref{eqn:morse}) holds and a group is called efficient if it has an efficient presentation.
It is known that inefficient groups exist \cite{s65}.
Let us move back to the case of general TRSs.
We have already seen that there exists a TRS such that none of its equivalent TRS satisfies the equality of (\ref{eqn:main}) in the last paragraph of Section \ref{section:algorithm}.
The \emph{deficiency} of (the equivalence class of) a TRS $(\Sigma,R)$, denoted by $\defc\langle\Sigma,R\rangle$, is the minimum of $\#R'-\#\Sigma'$ over all TRSs $(\Sigma',R')$ Tietze equivalent to $(\Sigma,R)$.
We pose the problem to decide inequalities of the deficiency for TRSs and see its undecidability is shown by using powerful facts from group theory.
\begin{prob}
\label{prob:def_bound}
Given an integer and a TRS $(\Sigma,R)$, does $\defc\langle \Sigma,R\rangle \le n$ hold?
\end{prob}
We will prove that Problem \ref{prob:def_bound} is undecidable.
It suffices to restrict the problems to the case where $n$ is negative and $(\Sigma,R)$ presents a group of finite index, that is, $(\Sigma,R)$ is an SRS and $\facmon{\Sigma}{R}=\Sigma^*/{\xleftrightarrow{*}_R}$ forms a group of finite index.
\begin{prob}
\label{prob:def_bound_gp}
Given a negative integer $n$ and an SRS $(\Sigma,R)$ whose corresponding monoid $\facmon{\Sigma}{R}$ forms a group of finite index, does $\defc\langle \Sigma,R\rangle \le n$ hold?
\end{prob}
\begin{thm}
\label{thm:undec}
Problem \ref{prob:def_bound_gp} is undecidable,
and then so is Problem \ref{prob:def_bound}
\end{thm}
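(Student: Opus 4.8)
The plan is to prove undecidability of Problem~\ref{prob:def_bound_gp} by reducing the word problem of a fixed, carefully chosen finitely presented group to it, and then to deduce undecidability of Problem~\ref{prob:def_bound} since \ref{prob:def_bound_gp} is just the restriction of it to negative $n$ and to SRSs presenting groups. The connection to the homology theory of the paper is the following: when an SRS $(\Sigma,R)$ presents a group, the groups $H_i(\Sigma,R)$ of Section~\ref{section:homology} agree with the usual integral homology of that group and $\defc\langle\Sigma,R\rangle$ agrees with the group-theoretic deficiency $-\max(\#\Sigma'-\#R')$, so inequality~(\ref{eqn:morse}) becomes the classical bound $\defc\langle\Sigma,R\rangle\ge s(H_2)-\rank(H_1)$ of Epstein~\cite{e61}. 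I will use this bound to certify the ``no'' side of the reduction, where I cannot read the deficiency off a single presentation.

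First I would fix, once and for all, a torsion-free, \emph{efficient} finitely presented group $G=\langle a_1,\dots,a_p\mid r_1,\dots,r_q\rangle$ with unsolvable word problem, so that the quantity $M:=\rank H_1(G)-s(H_2(G))$ is exactly the maximal value of $\#\Sigma'-\#R'$ over all presentations of $G$; such a group exists by the powerful facts from group theory alluded to in the section (e.g.\ a group defined by an aspherical presentation with unsolvable word problem, asphericity giving both efficiency and torsion-freeness). Given a word $w$ over $a_1,\dots,a_p$ I form $G_w=\langle a_1,\dots,a_p,t\mid r_1,\dots,r_q,\ t\,w\,t^{-1}w^{-1}\rangle$, that is, $G$ with a fresh generator $t$ forced to commute with $w$, and present $G_w$ as an SRS in the standard way (adjoining formal inverses and cancellation rules). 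The reduction is the clearly computable map $w\mapsto\bigl(G_w,\ n=-(M+1)\bigr)$.

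The heart of the argument is to show $w\rewrst{R}\varepsilon$ in $G$ (i.e.\ $w=_G 1$) if and only if $\defc\langle G_w\rangle\le -(M+1)$. If $w=_G 1$ the relator $twt^{-1}w^{-1}$ is a consequence of $r_1,\dots,r_q$, so $G_w\cong G*\bbZ$; since deficiency is additive under free products, $\max(\#\Sigma'-\#R')=M+1$ and hence $\defc\langle G_w\rangle=-(M+1)$. If $w\ne_G 1$, then torsion-freeness gives $\langle w\rangle\cong\bbZ$ and $G_w$ is the amalgam $G*_{\langle w\rangle}(\langle w\rangle\times\langle t\rangle)$ with $\langle w\rangle\times\langle t\rangle\cong\bbZ^2$; a Mayer--Vietoris computation then yields $H_2(G_w)\cong H_2(G)\oplus\bbZ$ and $\rank H_1(G_w)=\rank H_1(G)+1$, so the bound~(\ref{eqn:morse}) gives $\defc\langle G_w\rangle\ge s(H_2(G_w))-\rank H_1(G_w)=\bigl(s(H_2(G))+1\bigr)-\bigl(\rank H_1(G)+1\bigr)=-M>-(M+1)$, using $s(A\oplus\bbZ)=s(A)+1$. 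Thus the two cases fall on opposite sides of the threshold, so a decision procedure for Problem~\ref{prob:def_bound_gp} would decide the word problem of $G$, a contradiction; undecidability of Problem~\ref{prob:def_bound} follows immediately since \ref{prob:def_bound_gp} is a special case.

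I expect the main obstacle to be the ``$w\ne_G 1$'' case: because $\defc$ is a minimum over \emph{all} Tietze-equivalent systems, not just the given presentation of $G_w$, it cannot be computed directly, and this is precisely where the paper's homological lower bound must do the work, together with the Mayer--Vietoris identification of $H_2(G_w)$ and the elementary fact $s(A\oplus\bbZ)=s(A)+1$. A second delicate point, which is really what forces the appeal to deep group theory, is securing the fixed $G$ to be \emph{simultaneously} torsion-free (so $w\ne 1$ forces $\langle w\rangle\cong\bbZ$ and the amalgam has the stated homology), efficient (so that the threshold $M+1$ is actually attained when $w=1$), and equipped with an unsolvable word problem. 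Finally, I would verify the routine bookkeeping that each $G_w$ is presented by an SRS whose monoid $\facmon{\Sigma}{R}$ is a group meeting the structural hypothesis of Problem~\ref{prob:def_bound_gp}, and that $w\mapsto(G_w,-(M+1))$ is effective, which completes the reduction.
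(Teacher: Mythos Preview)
Your approach differs substantially from the paper's. The paper invokes the Adian--Rabin theorem, arguing that for each fixed negative $n$ the property ``$\defc G\le n$'' is Markov: free groups $F_k$ with $\defc F_k=-k$ supply the witness $G_+$, and the Reidemeister--Schreier formula $\#R'-\#\Sigma'+1=[G:H](\#R-\#\Sigma+1)$ is used to argue that the property passes to subgroups, so that any $G_-$ with $\defc G_->n$ serves. You instead reduce directly from the word problem of a single group $G$ possessing an aspherical presentation and unsolvable word problem: you form $G_w=\langle G,t\mid [t,w]\rangle$, identify it for $w\ne_G1$ with the amalgam $G*_{\langle w\rangle}(\langle w\rangle\times\bbZ)$, and use Mayer--Vietoris together with inequality~(\ref{eqn:morse}) to separate the cases $w=_G1$ and $w\ne_G1$ at the threshold $n=-(M+1)$.

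What each route buys: the paper's argument is short and treats all negative $n$ at once, but note that Markov condition~(2) requires that $G_-$ embed in \emph{no} group with the property, whereas Reidemeister--Schreier constrains only finite-index subgroups; since every finitely presented $H$ embeds in $H*F_k$ with $\defc(H*F_k)\le\defc H-k$, the verification of condition~(2) as written in the paper is not complete. Your route is longer and, as stated, handles a single value of $n$ (though free products with $F_{k}$ immediately extend it to every $n\le -(M+1)$), but each step is transparent and the role of~(\ref{eqn:morse}) is made fully explicit. Its one substantive external input is the existence of a finitely presented group with an aspherical presentation and unsolvable word problem; such groups do exist (work of Collins and Miller), and asphericity indeed forces both torsion-freeness and efficiency as you claim, but this deserves an explicit citation rather than a parenthetical. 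The homological steps---injectivity of $H_1(\langle w\rangle)\to H_1(\bbZ^2)$ in Mayer--Vietoris giving $H_2(G_w)\cong H_2(G)\oplus\bbZ$, and $s(A\oplus\bbZ)=s(A)+1$ for finitely generated abelian $A$---are correct as stated.
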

To prove the theorem, we will apply one of the most useful tools on computability in group theory called Adian-Rabin theorem
which states every ``Markov property'' is undecidable.
\begin{defi}
Let $P$ be a property of finitely presented groups which is preserved under group isomorphism.
The property $P$ is said to be a \emph{Markov property} if
\begin{enumerate}
\item there exists a finitely presented group $G_{+}$ with $P$, and
\item there exists a finitely presented group $G_{-}$ such that there is no injective group homomorphism from $G_{-}$ to a finitely presented group $G$ with $P$.
\end{enumerate}
\end{defi}
The condition (2) is equivalent to the case where there exists a finitely presented group $G_{-}$ which does not have $P$ and whenever a finitely generated group $G$ has $P$, all subgroups of $G$ also have $P$.
\begin{thmC}[{\cite{a55}\cite{r58}\cite[Theorem 4.1]{ls01}}]
Markov properties are undecidable.
\end{thmC}
\begin{proof}[Proof of Theorem \ref{thm:undec}]
Let $n$ be a negative integer.
If a group $G$ is presented by $(\Sigma,R)$, we write $\defc G$ for $\defc\langle \Sigma,R\rangle$.
We show that $\defc G\le n$ is a Markov property.
Since the free group $F_k$ satisfies $\defc F_k = 0 - k = -k$ for any $k\ge 0$,
there always exist $G_{+}$ with $\defc G_{+} \le n$ and $G_{-}$ with $G_{-} > n$.
Therefore it is enough to show that for any finitely presented group $G$ of finite index and a subgroup $H$ of $G$, $\defc G \le n$ implies $\defc H \le n$.
Let $G$ be a finitely presented group with $\defc G \le n$ and $H$ be a subgroup of $G$.
Given a finite presentation $(\Sigma,R)$ of $G$, it is known that we can construct a presentation $(\Sigma',R')$ of $H$ satisfying
$\#R'-\#\Sigma' + 1 = [G:H](\#R-\#\Sigma + 1)$
where $[G:H]$ is the index of $H$ in $G$.
(See \cite[Ch. II. Proposition 4.1]{ls01}, for example.)
The way of construction is known as \emph{Reidemeister-Schreier method}.
Thus, we have
\[
\defc H +1 \le [G:H](\defc G + 1) \le \defc G + 1 \le n + 1.
\tag*{\qedhere}
\]
\end{proof}

\section{Conclusions}
We have seen that the number of rewrite rules is bounded below by a computable number defined using homology groups of TRSs.
The computation is by simple term rewriting and matrix transformation.
The fact that the theory of groups must have at least two equational axioms is proved as a corollary.
We have also showed that deciding $\defc \langle\Sigma,R\rangle \le n$ is not computationally possible.

%%
%% Bibliography
%%

%% Please use bibtex,

\bibliographystyle{alphaurl}
\bibliography{main.bib}

\appendix

\section{The matrix $D(R)$ for The Theory of Groups}
For the TRS $R$ defined in Example \ref{example:group},
$D(R)$ is given by the transpose of
\begin{footnotesize}
\[
\left(
\begin{array}{cccccccccc}
1 & 0 & 0 & 0 & 0 & 0 & 0 & 0 & 0 & 0 \\
0 & 0 & 0 & 0 & 0 & 0 & 0 & 0 & 0 & 0 \\
0 & 0 & 0 & 0 & 0 & 0 & 0 & 0 & 0 & 0 \\
0 & 0 & 0 & 0 & 0 & 0 & 0 & 0 & 1 & 1 \\
0 & 0 & 0 & 0 & 0 & 0 & 0 & 0 & 0 & 0 \\
0 & 0 & 0 & 0 & 0 & 1 & 0 & 0 & 0 & 1 \\
1 & 1 & 0 & 0 & 1 & 1 & 0 & 0 & 0 & 0 \\
1 & 1 & 0 & 1 & 0 & 0 & 0 & 0 & 1 & 0 \\
1 & 0 & 0 & 0 & 0 & 0 & 0 & 0 & 1 & 1 \\
1 & 1 & 1 & 0 & 0 & 0 & 0 & 0 & 0 & 0 \\
1 & 0 & 0 & 0 & 0 & 0 & 0 & 0 & 0 & 0 \\
1 & 0 & 0 & 0 & 0 & 0 & 0 & 0 & 0 & 0 \\
0 & 1 & 1 & 0 & 0 & 0 & 1 & 0 & 0 & 1 \\
0 & 0 & 0 & 0 & 0 & 0 & 1 & 0 & 1 & 0 \\
0 & 0 & 0 & 0 & 0 & 1 & 1 & 0 & 0 & 0 \\
0 & 1 & 0 & 1 & 0 & 0 & 1 & 0 & 0 & 0 \\
0 & 1 & 1 & 0 & 0 & 0 & 0 & 0 & 0 & 0 \\
0 & 1 & 1 & 0 & 0 & 0 & 1 & 0 & 0 & 1 \\
0 & 0 & 1 & 1 & 0 & 0 & 0 & 0 & 1 & 0 \\
0 & 0 & 1 & 0 & 1 & 1 & 0 & 0 & 0 & 0 \\
0 & 0 & 1 & 0 & 1 & 0 & 1 & 0 & 0 & 0 \\
1 & 0 & 0 & 0 & 0 & 0 & 0 & 0 & 1 & 1 \\
0 & 0 & 0 & 1 & 1 & 0 & 1 & 0 & 0 & 1 \\
0 & 0 & 1 & 1 & 0 & 0 & 0 & 1 & 1 & 0 \\
0 & 0 & 0 & 1 & 1 & 0 & 0 & 1 & 0 & 0 \\
0 & 1 & 0 & 1 & 0 & 0 & 1 & 0 & 0 & 0 \\
0 & 0 & 1 & 1 & 0 & 1 & 0 & 0 & 0 & 0 \\
1 & 0 & 0 & 0 & 0 & 1 & 0 & 0 & 0 & 1 \\
0 & 0 & 0 & 1 & 1 & 0 & 1 & 0 & 0 & 1 \\
0 & 0 & 1 & 0 & 1 & 0 & 0 & 0 & 1 & 0 \\
0 & 0 & 0 & 1 & 1 & 0 & 0 & 1 & 0 & 0 \\
0 & 1 & 0 & 0 & 1 & 0 & 1 & 0 & 0 & 0 \\
0 & 0 & 1 & 0 & 1 & 1 & 0 & 1 & 0 & 0 \\
0 & 0 & 0 & 0 & 0 & 0 & 0 & 1 & 0 & 0 \\
0 & 0 & 0 & 0 & 0 & 1 & 0 & 0 & 0 & 1 \\
1 & 0 & 1 & 0 & 1 & 1 & 0 & 1 & 0 & 0 \\
0 & 0 & 0 & 0 & 0 & 1 & 0 & 0 & 1 & 0 \\
0 & 0 & 0 & 0 & 0 & 1 & 0 & 0 & 1 & 0 \\
0 & 0 & 0 & 0 & 0 & 1 & 0 & 1 & 1 & 0 \\
0 & 0 & 0 & 0 & 0 & 1 & 1 & 0 & 0 & 0 \\
0 & 0 & 0 & 0 & 0 & 0 & 1 & 0 & 1 & 0 \\
0 & 0 & 0 & 0 & 0 & 0 & 0 & 1 & 0 & 0 \\
0 & 0 & 0 & 0 & 0 & 0 & 0 & 0 & 0 & 0 \\
0 & 0 & 0 & 0 & 0 & 0 & 0 & 1 & 0 & 0 \\
0 & 0 & 0 & 0 & 0 & 1 & 0 & 1 & 1 & 0 \\
0 & 0 & 0 & 0 & 0 & 0 & 0 & 1 & 0 & 0 \\
0 & 0 & 0 & 0 & 0 & 0 & 0 & 0 & 1 & 1 \\
1 & 0 & 1 & 0 & 1 & 0 & 0 & 0 & 1 & 0 \\
\end{array}
\right)
\]
\end{footnotesize}
where the $i$-th column corresponds to the rule $G_i$, and the $j$-th row corresponds to the critical pair $C_j$ shown in the next two pages.
\newpage
\begin{figure}[h]
{\small
\begin{align*}
  C_{1}:\ &m(m(x_1, x_2), x_3)\rrule m(x_1, m(x_2, x_3)),\quad m(m(x_4, x_5), x_6)\rrule m(x_4, m(x_5, x_6)),\quad
   m(\square , x_3),\\ &\{x_6\mapsto x_2, x_1\mapsto m(x_4, x_5)\}\\
  C_{2}:\ &i(m(x_1, x_2))\rrule m(i(x_2), i(x_1)),\quad m(m(x_3, x_4), x_5)\rrule m(x_3, m(x_4, x_5)),\quad
   i(\square ),\\ &\{x_5\mapsto x_2, x_1\mapsto m(x_3, x_4)\}\\
  C_{3}:\ &m(m(x_1, x_2), x_3)\rrule m(x_1, m(x_2, x_3)),\quad m(x_4, m(i(x_4), x_5))\rrule x_5,\quad m(\square , x_3),\\
  & \{x_2\mapsto m(i(x_1), x_5), x_4\mapsto x_1\}\\
  C_{4}:\ &m(x_1, m(i(x_1), x_2))\rrule x_2,\quad m(m(x_3, x_4), x_5)\rrule m(x_3, m(x_4, x_5)),\quad \square ,\\
  & \{x_5\mapsto m(i(m(x_3, x_4)), x_2), x_1\mapsto m(x_3, x_4)\}\\
  C_{5}:\ &m(m(x_1, x_2), x_3)\rrule m(x_1, m(x_2, x_3)),\quad m(i(x_4), m(x_4, x_5))\rrule x_5,\quad m(\square , x_3),\\
  & \{x_2\mapsto m(x_4, x_5), x_1\mapsto i(x_4)\}\\
  C_{6}:\ &m(i(x_1), m(x_1, x_2))\rrule x_2,\quad m(m(x_3, x_4), x_5)\rrule m(x_3, m(x_4, x_5)),\quad
  m(i(x_1), \square ),\\ &\{x_5\mapsto x_2, x_1\mapsto m(x_3, x_4)\}\\
  C_{7}:\ &m(m(x_1, x_2), x_3)\rrule m(x_1, m(x_2, x_3)),\quad m(i(x_4), x_4)\rrule e,\quad
   m(\square , x_3),\\& \{x_4\mapsto x_2, x_1\mapsto i(x_2)\}\\
  C_{8}:\ &m(m(x_1, x_2), x_3)\rrule m(x_1, m(x_2, x_3)),\quad m(x_4, i(x_4))\rrule e,\quad m(\square , x_3),\\
  & \{x_2\mapsto i(x_1), x_4\mapsto x_1\}\\
  C_{9}:\ &m(x_1, i(x_1))\rrule e,\quad m(m(x_2, x_3), x_4)\rrule m(x_2, m(x_3, x_4)),\quad \square ,\\
  & \{x_4\mapsto i(m(x_2, x_3)), x_1\mapsto m(x_2, x_3)\}\\
  C_{10}:\ &m(m(x_1, x_2), x_3)\rrule m(x_1, m(x_2, x_3)),\quad m(x_4, e)\rrule x_4,\quad m(\square , x_3),\quad \{x_2\mapsto e, x_4\mapsto x_1\}\\
  C_{11}:\ &m(x_1, e)\rrule x_1,\quad m(m(x_2, x_3), x_4)\rrule m(x_2, m(x_3, x_4)),\quad \square ,\quad \{x_4\mapsto e, x_1\mapsto m(x_2, x_3)\}\\
  C_{12}:\ &m(m(x_1, x_2), x_3)\rrule m(x_1, m(x_2, x_3)),\quad m(e, x_4)\rrule x_4,\quad m(\square , x_3),\quad \{x_4\mapsto x_2, x_1\mapsto e\}\\
  C_{13}:\ &i(m(x_1, x_2))\rrule m(i(x_2), i(x_1)),\quad m(e, x_3)\rrule x_3,\quad i(\square ),\quad \{x_3\mapsto x_2, x_1\mapsto e\}\\
  C_{14}:\ &m(x_1, m(i(x_1), x_2))\rrule x_2,\quad m(e, x_3)\rrule x_3,\quad \square ,\quad \{x_3\mapsto m(i(e), x_2), x_1\mapsto e\}\\
  C_{15}:\ &m(i(x_1), m(x_1, x_2))\rrule x_2,\quad m(e, x_3)\rrule x_3,\quad m(i(x_1), \square ),\quad \{x_3\mapsto x_2, x_1\mapsto e\}\\
  C_{16}:\ &m(x_1, i(x_1))\rrule e,\quad m(e, x_2)\rrule x_2,\quad \square ,\quad \{x_2\mapsto i(e), x_1\mapsto e\}\\
  C_{17}:\ &m(x_1, e)\rrule x_1,\quad m(e, x_2)\rrule x_2,\quad \square ,\quad \{x_2\mapsto e, x_1\mapsto e\}\\
  C_{18}:\ &i(m(x_1, x_2))\rrule m(i(x_2), i(x_1)),\quad m(x_3, e)\rrule x_3,\quad i(\square ),\quad \{x_2\mapsto e, x_3\mapsto x_1\}\\
  C_{19}:\ &m(x_1, m(i(x_1), x_2))\rrule x_2,\quad m(x_3, e)\rrule x_3,\quad m(x_1, \square ),\quad \{x_2\mapsto e, x_3\mapsto i(x_1)\}\\
  C_{20}:\ &m(i(x_1), m(x_1, x_2))\rrule x_2,\quad m(x_3, e)\rrule x_3,\quad m(i(x_1), \square ),\quad \{x_2\mapsto e, x_3\mapsto x_1\}\\
\end{align*}
}
  \caption{The critical pairs of the complete TRS $R$ (1)\\ ($C_j:\ l\rrule r,\  l'\rrule r',\  C,\  \sigma$ means $C_j$ is the critical pair $(r\sigma,C[r'\sigma])$.)}
\end{figure}

\pagebreak

\begin{figure}[h]
{\small
\begin{align*}
  C_{21}:\ &m(i(x_1), x_1)\rrule e,\quad m(x_2, e)\rrule x_2,\quad \square ,\quad \{x_1\mapsto e, x_2\mapsto i(e)\}\\
  C_{22}:\ &m(x_1, i(x_1))\rrule e,\quad i(m(x_2, x_3))\rrule m(i(x_3), i(x_2)),\quad m(x_1, \square ),\quad \{x_1\mapsto m(x_2, x_3)\}\\
  C_{23}:\ &i(m(x_1, x_2))\rrule m(i(x_2), i(x_1)),\quad m(x_3, i(x_3))\rrule e,\quad i(\square ),\quad \{x_2\mapsto i(x_1), x_3\mapsto x_1\}\\
  C_{24}:\ &m(x_1, m(i(x_1), x_2))\rrule x_2,\quad m(x_3, i(x_3))\rrule e,\quad m(x_1, \square ),\quad \{x_2\mapsto i(i(x_1)), x_3\mapsto i(x_1)\}\\
  C_{25}:\ &m(x_1, i(x_1))\rrule e,\quad i(i(x_2))\rrule x_2,\quad m(x_1, \square ),\quad \{x_1\mapsto i(x_2)\}\\
  C_{26}:\ &m(x_1, i(x_1))\rrule e,\quad i(e)\rrule e,\quad m(x_1, \square ),\quad \{x_1\mapsto e\}\\
  C_{27}:\ &m(i(x_1), m(x_1, x_2))\rrule x_2,\quad m(x_3, i(x_3))\rrule e,\quad m(i(x_1), \square ),\quad \{x_2\mapsto i(x_1), x_3\mapsto x_1\}\\
  C_{28}:\ &m(i(x_1), x_1)\rrule e,\quad i(m(x_2, x_3))\rrule m(i(x_3), i(x_2)),\quad m(\square , x_1),\quad \{x_1\mapsto m(x_2, x_3)\}\\
  C_{29}:\ &i(m(x_1, x_2))\rrule m(i(x_2), i(x_1)),\quad m(i(x_3), x_3)\rrule e,\quad i(\square ),\quad \{x_3\mapsto x_2, x_1\mapsto i(x_2)\}\\
  C_{30}:\ &m(x_1, m(i(x_1), x_2))\rrule x_2,\quad m(i(x_3), x_3)\rrule e,\quad m(x_1, \square ),\quad \{x_1\mapsto x_2, x_3\mapsto x_2\}\\
  C_{31}:\ &m(i(x_1), x_1)\rrule e,\quad i(i(x_2))\rrule x_2,\quad m(\square , x_1),\quad \{x_1\mapsto i(x_2)\}\\
  C_{32}:\ &m(i(x_1), x_1)\rrule e,\quad i(e)\rrule e,\quad m(\square , x_1),\quad \{x_1\mapsto e\}\\
  C_{33}:\ &m(i(x_1), m(x_1, x_2))\rrule x_2,\quad m(i(x_3), x_3)\rrule e,\quad m(i(x_1), \square ),\quad \{x_3\mapsto x_2, x_1\mapsto i(x_2)\}\\
  C_{34}:\ &m(i(x_1), m(x_1, x_2))\rrule x_2,\quad m(i(x_3), m(x_3, x_4))\rrule x_4,\quad m(i(x_1), \square ),\quad \{x_2\mapsto m(x_3, x_4), x_1\mapsto i(x_3)\}\\
  C_{35}:\ &m(i(x_1), m(x_1, x_2))\rrule x_2,\quad i(m(x_3, x_4))\rrule m(i(x_4), i(x_3)),\quad m(\square , m(x_1, x_2)),\quad \{x_1\mapsto m(x_3, x_4)\}\\
  C_{36}:\ &i(m(x_1, x_2))\rrule m(i(x_2), i(x_1)),\quad m(i(x_3), m(x_3, x_4))\rrule x_4,\quad i(\square ),\quad \{x_2\mapsto m(x_3, x_4), x_1\mapsto i(x_3)\}\\
  C_{37}:\ &m(i(x_1), m(x_1, x_2))\rrule x_2,\quad m(x_3, m(i(x_3), x_4))\rrule x_4,\quad m(i(x_1), \square ),\quad \{x_2\mapsto m(i(x_1), x_4), x_3\mapsto x_1\}\\
  C_{38}:\ &m(x_1, m(i(x_1), x_2))\rrule x_2,\quad m(i(x_3), m(x_3, x_4))\rrule x_4,\quad m(x_1, \square ),\quad \{x_2\mapsto m(x_1, x_4), x_3\mapsto x_1\}\\
  C_{39}:\ &m(i(x_1), m(x_1, x_2))\rrule x_2,\quad i(i(x_3))\rrule x_3,\quad m(\square , m(x_1, x_2)),\quad \{x_1\mapsto i(x_3)\}\\
  C_{40}:\ &m(i(x_1), m(x_1, x_2))\rrule x_2,\quad i(e)\rrule e,\quad m(\square , m(x_1, x_2)),\quad \{x_1\mapsto e\}\\
  C_{41}:\ &m(x_1, m(i(x_1), x_2))\rrule x_2,\quad i(e)\rrule e,\quad m(x_1, m(\square , x_2)),\quad \{x_1\mapsto e\}\\
  C_{42}:\ &i(i(x_1))\rrule x_1,\quad i(e)\rrule e,\quad i(\square ),\quad \{x_1\mapsto e\}\\
  C_{43}:\ &i(i(x_1))\rrule x_1,\quad i(i(x_2))\rrule x_2,\quad i(\square ),\quad \{x_1\mapsto i(x_2)\}\\
  C_{44}:\ &i(i(x_1))\rrule x_1,\quad i(m(x_2, x_3))\rrule m(i(x_3), i(x_2)),\quad i(\square ),\quad \{x_1\mapsto m(x_2, x_3)\}\\
  C_{45}:\ &m(x_1, m(i(x_1), x_2))\rrule x_2,\quad i(i(x_3))\rrule x_3,\quad m(x_1, m(\square , x_2)),\quad \{x_1\mapsto i(x_3)\}\\
  C_{46}:\ &m(x_1, m(i(x_1), x_2))\rrule x_2,\quad m(x_3, m(i(x_3), x_4))\rrule x_4,\quad m(x_1, \square ),\\
  & \{x_2\mapsto m(i(i(x_1)), x_4), x_3\mapsto i(x_1)\}\\
  C_{47}:\ &m(x_1, m(i(x_1), x_2))\rrule x_2,\quad i(m(x_3, x_4))\rrule m(i(x_4), i(x_3)),\quad m(x_1, m(\square , x_2)),\quad \{x_1\mapsto m(x_3, x_4)\}\\
  C_{48}:\ &i(m(x_1, x_2))\rrule m(i(x_2), i(x_1)),\quad m(x_3, m(i(x_3), x_4))\rrule x_4,\quad i(\square ),\quad \{x_2\mapsto m(i(x_1), x_4), x_3\mapsto x_1\}
\end{align*}
}
  \caption{The critical pairs of the complete TRS $R$ (2)\\ ($C_j:\ l\rrule r,\  l'\rrule r',\  C,\  \sigma$ means $C_j$ is the critical pair $(r\sigma,C[r'\sigma])$.)}
\end{figure}

\newpage
\section{Experimental Results}
\label{section:experiment}
We present our experimental data in Table 1 and Table 2.
The data set of complete TRSs is taken from experimental results of MKBtt \cite{SWKM08}, which include benchmark problems \cite{sk90},\cite{ws06},\cite{wsw}.
The column headed ``degree'' shows the degree of the TRS,
the column $\#R_\textit{before}$ the number of rules,
the column $\#R_\textit{after}$ the number of rules after completion,
the column $s(H_2)$ Malbos-Mimram's lower bound,
and the column $\#R_\textit{after}-e(R)$ our lower bound.
The table is also available at \url{https://mir-ikbch.github.io/homtrs/experiment/result.html}
which has links to TRS files.
\begin{table}[h]
\caption{Malbos-Mimram's and our lower bounds (1)}
\begin{tabular}{lrrrrr}
name & degree & % \(\#mathit{symbol}\) &
\(\#R_{\mathit{before}}\) &
\(\#R_{\mathit{after}}\) &
\(s(H_2)\) &
\(\#R_{\mathit{after}}-e(R)\)
\\
\hline
\texttt{ASK93\_1} &
  0 & 2 &
  % 4 & 
  2 &
  % 0 &
  0 & 2
\\
\texttt{ASK93\_6} &
  0 & 11 &
  % 15 & 
  11 &
  % 2 &
  0 & 9
\\
\texttt{BD94\_collapse} &
  1 & 5 &
  % -- & 
  5 &
  % -- &
  -- & --
\\
\texttt{BD94\_peano} &
  1 & 4 &
  % -- & 
  4 &
  % -- &
  -- & --
\\
\texttt{BD94\_sqrt} &
  2 & 3 &
  % 4 & 
  4 &
  % 1 &
  0 & 3
\\
\texttt{BGK94\_D08} &
  2 & 6 &
  % 5 & 
  21 &
  % 120 &
  2 & 5
\\
\texttt{BGK94\_D10} &
  2 & 6 &
  % 5 & 
  21 &
  % 122 &
  1 & 4
\\
\texttt{BGK94\_D12} &
  2 & 6 &
  % 5 & 
  20 &
  % 116 &
  2 & 5
\\
\texttt{BGK94\_D16} &
  2 & 6 &
  % 5 & 
  20 &
  % 119 &
  2 & 5
\\
\texttt{BH96\_fac8\_theory} &
  1 & 6 &
  % -- & 
  6 &
  % -- &
  -- & --
\\
\texttt{Chr89\_A2} &
  2 & 5 &
  % 5 & 
  18 &
  % 108 &
  0 & 4
\\
\texttt{Chr89\_A3} &
  2 & 7 &
  % 7 & 
  16 &
  % 75 &
  0 & 6
\\
\texttt{KK99\_linear\_assoc} &
  0 & 2 &
  % 2 & 
  2 &
  % 2 &
  0 & 1
\\
\texttt{LS94\_G0} &
  2 & 8 &
  % 5 & 
  13 &
  % 53 &
  1 & 4
\\
\texttt{Les83\_fib} &
  1 & 9 &
  % -- & 
  9 &
  % -- &
  -- & --
\\
\texttt{Les83\_subset} &
  1 & 12 &
  % -- & 
  12 &
  % -- &
  -- & --
\\
\texttt{OKW95\_dt1\_theory} &
  1 & 11 &
  % -- & 
  11 &
  % -- &
  -- & --
\\
\texttt{SK90\_3.01} &
  2 & 4 &
  % 4 & 
  11 &
  % 48 &
  0 & 3
\\
\texttt{SK90\_3.02} &
  0 & 3 &
  % 2 & 
  3 &
  % 4 &
  1 & 2
\\
\texttt{SK90\_3.03} &
  2 & 5 &
  % 4 & 
  11 &
  % 48 &
  0 & 3
\\
\texttt{SK90\_3.04} &
  1 & 4 &
  % -- & 
  8 &
  % -- &
  -- & --
\\
\texttt{SK90\_3.05} &
  1 & 4 &
  % -- & 
  13 &
  % -- &
  -- & --
\\
\texttt{SK90\_3.06} &
  1 & 5 &
  % -- & 
  12 &
  % -- &
  -- & --
\\
\texttt{SK90\_3.07} &
  1 & 5 &
  % -- & 
  15 &
  % -- &
  -- & --
\\
\texttt{SK90\_3.08} &
  2 & 5 &
  % 3 & 
  4 &
  % 4 &
  0 & 2
\\
\texttt{SK90\_3.10} &
  2 & 4 &
  % 4 & 
  8 &
  % 9 &
  0 & 3
\\
\texttt{SK90\_3.11} &
  0 & 4 &
  % 4 & 
  3 &
  % 0 &
  0 & 3
\\
\texttt{SK90\_3.12} &
  2 & 4 &
  % 3 & 
  9 &
  % 38 &
  0 & 2
\\
\texttt{SK90\_3.13} &
  0 & 6 &
  % 5 & 
  6 &
  % 6 &
  0 & 3
\\
\texttt{SK90\_3.14} &
  0 & 7 &
  % 5 & 
  8 &
  % 8 &
  1 & 5
\\
\texttt{SK90\_3.15} &
  2 & 8 &
  % 5 & 
  7 &
  % 12 &
  1 & 4
\\
\texttt{SK90\_3.16} &
  1 & 4 &
  % -- & 
  4 &
  % -- &
  -- & --
\\
\texttt{SK90\_3.17} &
  1 & 3 &
  % -- & 
  5 &
  % -- &
  -- & --
\end{tabular}
\end{table}
\newpage
\begin{table}[h]
\caption{Malbos-Mimram's and our lower bounds (2)}
\begin{tabular}{lrrrrr}
name & degree & % \(\#mathit{symbol}\) &
\(\#R_{\mathit{before}}\) &
\(\#R_{\mathit{after}}\) &
\(s(H_2)\) &
\(\#R_{\mathit{after}}-e(R)\)
\\
\hline
\\
\texttt{SK90\_3.18} &
  0 & 5 &
  % 4 & 
  6 &
  % 6 &
  2 & 4
\\
\texttt{SK90\_3.19} &
  0 & 9 &
  % 5 & 
  7 &
  % 6 &
  1 & 4
\\
\texttt{SK90\_3.20} &
  1 & 10 &
  % -- & 
  11 &
  % -- &
  -- & --
\\
\texttt{SK90\_3.21} &
  1 & 9 &
  % -- & 
  4 &
  % -- &
  -- & --
\\
\texttt{SK90\_3.23} &
  0 & 4 &
  % 5 & 
  8 &
  % 7 &
  1 & 4
\\
\texttt{SK90\_3.24} &
  0 & 3 &
  % 2 & 
  2 &
  % 2 &
  0 & 2
\\
\texttt{SK90\_3.25} &
  0 & 1 &
  % 2 & 
  2 &
  % 2 &
  0 & 1
\\
\texttt{SK90\_3.27} &
  0 & 8 &
  % 3 & 
  3 &
  % 0 &
  0 & 3
\\
\texttt{SK90\_3.28} &
  0 & 9 &
  % 6 & 
  18 &
  % 38 &
  0 & 6
\\
\texttt{SK90\_3.29} &
  0 & 7 &
  % 8 & 
  8 &
  % 1 &
  2 & 7
\\
\texttt{SK90\_3.30} &
  1 & 3 &
  % -- & 
  3 &
  % -- &
  -- & --
\\
\texttt{SK90\_3.31} &
  1 & 3 &
  % -- & 
  3 &
  % -- &
  -- & --
\\
\texttt{SK90\_3.32} &
  1 & 3 &
  % -- & 
  2 &
  % -- &
  -- & --
\\
\texttt{SK90\_3.33} &
  0 & 3 &
  % 3 & 
  3 &
  % 2 &
  0 & 2
\\
\texttt{TPTP-BOO027-1\_theory} &
  1 & 5 &
  % -- & 
  5 &
  % -- &
  -- & --
\\
\texttt{TPTP-COL053-1\_theory} &
  0 & 1 &
  % 2 & 
  1 &
  % 0 &
  0 & 1
\\
\texttt{TPTP-COL056-1\_theory} &
  0 & 3 &
  % 5 & 
  3 &
  % 0 &
  0 & 3
\\
\texttt{TPTP-COL060-1\_theory} &
  0 & 2 &
  % 3 & 
  2 &
  % 0 &
  0 & 2
\\
\texttt{TPTP-COL085-1\_theory} &
  0 & 1 &
  % 3 & 
  1 &
  % 0 &
  0 & 1
\\
\texttt{TPTP-GRP010-4\_theory} &
  2 & 4 &
  % 5 & 
  11 &
  % 48 &
  1 & 3
\\
\texttt{TPTP-GRP011-4\_theory} &
  2 & 4 &
  % 5 & 
  11 &
  % 48 &
  1 & 3
\\
\texttt{TPTP-GRP012-4\_theory} &
  2 & 4 &
  % 3 & 
  10 &
  % 48 &
  0 & 2
\\
\texttt{slothrop\_ackermann} &
  1 & 3 &
  % -- & 
  3 &
  % -- &
  -- & --
\\
\texttt{slothrop\_cge} &
  2 & 6 &
  % 5 & 
  20 &
  % 107 &
  0 & 4
\\
\texttt{slothrop\_cge3} &
  2 & 9 &
  % 6 & 
  28 &
  % 164 &
  0 & 5
\\
\texttt{slothrop\_endo} &
  2 & 4 &
  % 4 & 
  14 &
  % 69 &
  0 & 3
\\
\texttt{slothrop\_equiv\_proofs} &
  1 & 12 &
  % -- & 
  23 &
  % -- &
  -- & --
\\
\texttt{slothrop\_fgh} &
  1 & 4 &
  % -- & 
  3 &
  % -- &
  -- & --
\\
\texttt{slothrop\_groups} &
  2 & 3 &
  % 3 & 
  10 &
  % 48 &
  0 & 2
\\
\texttt{slothrop\_groups\_conj} &
  2 & 5 &
  % 3 & 
  10 &
  % 48 &
  0 & 2
\\
\texttt{slothrop\_hard} &
  0 & 2 &
  % 3 & 
  2 &
  % 0 &
  1 & 2
\end{tabular}
\end{table}

\end{document}